\newcommand{\eq}[1]{Eq.~\hyperref[eq:#1]{(\ref*{eq:#1})}}
\renewcommand{\sec}[1]{\hyperref[sec:#1]{Section~\ref*{sec:#1}}}
\newcommand{\app}[1]{\hyperref[app:#1]{Appendix~\ref*{app:#1}}}
\newcommand{\tab}[1]{\hyperref[tab:#1]{Table~\ref*{tab:#1}}}
\newcommand{\fig}[1]{\hyperref[fig:#1]{Figure~\ref*{fig:#1}}}
\newcommand{\figa}[2]{\hyperref[fig:#1]{Figure~\ref*{fig:#1}#2}}
\newcommand{\figx}[2]{\hyperref[fig:#1]{Figure~\ref*{fig:#1}(#2)}}
\newcommand{\thm}[1]{\hyperref[thm:#1]{Theorem~\ref*{thm:#1}}}
\newcommand{\lem}[1]{\hyperref[lem:#1]{Lemma~\ref*{lem:#1}}}
\newcommand{\cor}[1]{\hyperref[cor:#1]{Corollary~\ref*{cor:#1}}}
\newcommand{\defn}[1]{\hyperref[def:#1]{Definition~\ref*{def:#1}}}
\newcommand{\alg}[1]{\hyperref[alg:#1]{Algorithm~\ref*{alg:#1}}}
\newcommand{\boxno}[1]{f_b(#1)}
\newcommand{\Levs}{L}
\newcommand{\Nuc}{\mathcal{\zeta}}
\newcommand{\parent}{P}
\newcommand{\child}{C}
\newcommand{\nearn}{N\!N}
\newcommand{\inter}{I}
\newcommand{\boxwid}{w}
\newcommand{\firstbox}{j_0}
\newcommand{\translate}{\mathcal{T}}
\newcommand{\trunc}{\mathcal{P}}
\newcommand{\near}{\mathrm{near}}
\newcommand{\far}{\mathrm{far}}
\def\bra#1{\mathinner{\langle{#1}|}}
\def\ket#1{\mathinner{|{#1}\rangle}}
\newcommand{\norm}[1]{\left\lVert#1\right\rVert}
\newcommand{\cC}{{\cal C}}
\newcommand\cO{{\cal O}}
\newcommand{\nn}{\nonumber \\}
\newcommand{\vb}[1]{\mathbf{#1}}
\newtheorem{theorem}{Theorem}
\newtheorem{lemma}[theorem]{Lemma}
\newtcolorbox[auto counter]{summarybox}[2][]{%
    colframe=blue!75!black,  
    fonttitle=\bfseries,      
    title={Box \thetcbcounter: #2}, 
    label={#1},              
    enhanced,                
    rounded corners,         
    drop shadow              
}
\newcommand{\Google}{\affiliation{Google Quantum AI, Venice, CA 90291, United States}}
\newcommand{\Macquarie}{\affiliation{School of Mathematical and Physical Sciences, Macquarie University, New South Wales 2109, Australia}}
\newcommand{\Sandia}{\affiliation{Quantum Algorithms and Applications Collaboratory, Sandia National Laboratories,
Albuquerque, NM 87185, United States}}
\newcommand{\USyd}{\affiliation{School of Chemistry, University of Sydney, NSW 2006, Australia}}
\newcommand{\Stanford}{\affiliation{Stanford Institute for Theoretical Physics, Stanford University, Stanford, CA 94305,  United States}}
\begin{document}

\title{Quantum simulation of electronic structure via quantum fast multipole method}

\date{\today}

\author{Dominic W.~Berry}
\email{dominic.berry@mq.edu.au}
\Macquarie

\author{Kianna Wan}
\Google
\Stanford

\author{Andrew D. Baczewski}
\Sandia

\author{Elliot C.~Eklund}
\USyd

\author{Arkin Tikku}
\USyd

\author{Ryan Babbush}
\email{ryanbabbush@gmail.com}
\Google

\date{\today}

\begin{abstract}
Here we describe an approach for simulating electronic structure on quantum computers with significantly lower asymptotic complexity than prior work.
The approach uses a real-space first-quantised representation of the molecular Hamiltonian which we propagate using high-order product formulae.
Essential for this low complexity is the use of a technique similar to the fast multipole method for computing the Coulomb operator with $\widetilde{\cal O}(\eta)$ complexity for a simulation with $\eta$ particles.
We show how to modify this algorithm so that it can be implemented on a quantum computer. We ultimately demonstrate an approach with $t(\eta^{4/3}N^{1/3} + \eta^{1/3} N^{2/3} ) (\eta Nt/\epsilon)^{o(1)}$ gate complexity, where $N$ is the number of grid points, $\epsilon$ is target precision, and $t$ is the duration of time evolution.
This is roughly a speedup by ${\cal O}(\eta)$ over most prior algorithms.
We provide lower complexity than all prior work for $N<\eta^7$ (the regime of practical interest), with only first-quantised interaction-picture simulations providing better performance for $N>\eta^7$.
As with the classical fast multipole method, large particle numbers $\eta\gtrsim 10^3$ would be needed to realise this advantage.
\end{abstract}

\maketitle

\section{Introduction}
Solving problems in chemistry is expected to be one of the most promising applications of quantum computing. Much of the early research in this direction focused on developing and improving quantum algorithms to compute energies of molecular systems using the quantum phase estimation algorithm. We often refer to the problem of computing molecular energies as the electronic structure problem.
More recently, an effort has been made to rigorously explore the potential of quantum computing for applications in chemical dynamics where the focus is mainly computing correlation functions, reaction rates, and quantum yields \cite{eklund2026}. Both electronic structure and chemical dynamics rely on efficient quantum algorithms to simulate unitary evolution under the governing Hamiltonian. 
Enabling practical application for a wide range of problems in chemistry depends on developing and optimising algorithms for quantum simulation.
Even more broadly, simulations of materials~\cite{SuPRXQuantum21,berry2024quantum} and degenerate plasmas~\cite{BabbushNC2023,RubinPNAS24} make use of similar techniques, and often require studying systems with many more interacting electrons than is typical in chemistry.
Thus, improvements to the asymptotic scaling of quantum simulation methods with the number of electrons might facilitate their application to not only larger molecular systems but also condensed phases.

The first quantum algorithms designed for quantum simulation were based on product formulae \cite{Lloyd1996}. As the field of quantum algorithms matured, more advanced methods were introduced. These algorithms included techniques based on linear combinations of unitaries \cite{Wiebe2012,BerrySTOC14}, quantum walks \cite{BerryQIC12}, Taylor series \cite{BerryPRL15}, quantum signal processing \cite{QSP}, and block encoding \cite{Low2019hamiltonian}.
A key breakthrough with these more recent methods is that they provide a relatively favourable complexity scaling in the simulation error $\epsilon$, scaling as $\cO(\log(1/\epsilon))$. This is in contrast to product formulae, which scale as $\mathrm{poly}(1/\epsilon)$. However, recent work shows that product formulae can be competitive with logarithmically scaling methods for simulation because they provide low error in practice \cite{TrotComm,Morales2025}.

Simulating the types of Hamiltonians that are relevant in chemistry is made difficult by the fact that all pairwise interactions between charged particles need to be summed over in order to evaluate the Coulomb potential. For a system consisting of $\eta$ electrons, this calculation yields an $\eta^2$ factor in the complexity of the algorithm. However, some approaches to simulation avoid the double sum, reducing this factor to $\cO(\eta)$. For example, when the state of the molecule is encoded as a list of occupied orbitals in a first-quantised representation, it is possible to achieve a linear dependence in $\eta$ by block encoding the Hamiltonian, which does not need the potential to be computed.

For second quantisation, the Coulomb potential may be calculated with a fast Fourier transform, avoiding the double sum \cite{low2019hamiltoniansimulationinteractionpicture}.
However, due to second quantisation the complexity is at least linear in the number of orbitals $N$. In cases where $N\gg\eta$, it is preferable to use first-quantised based approaches to avoid the factor of $N$ in the complexity.
We note that very recently, a first-quantised approach for simulating non-relativistic quantum electrodynamics was proposed that also avoids the $\cO(\eta^2)$ overhead for the Coulomb potential \cite{Stetina2025}.

Simulations based on product formulae in first quantisation \cite{Kassal2008,RubinPNAS24} show promise for better scaling than many of these algorithms, due to the Low \textit{et al.}~\cite{LowPRX2022} bounds on the error in fermionic systems.
A long-standing open question in quantum algorithms has been how to avoid the $\cO(\eta^2)$ overhead for the Coulomb potential in this approach.
In classical algorithms, tree codes (e.g., Barnes-Hut~\cite{Barnes1986}) and the fast multipole method (FMM)~\cite{ROKHLIN1985187,Carrier1988} enable computation of the potential to be performed with complexity $\cO(\eta\,\mathrm{polylog}(\eta)\,\mathrm{polylog}(1/\epsilon))$ or $\cO(\eta\,\mathrm{polylog}(1/\epsilon))$, respectively.
The difficulty with applying these methods in quantum algorithms is that a direct translation of the various classical algorithms would result in data accesses in locations governed by the values in quantum registers.
Each data access has complexity corresponding to the number of data locations, which here is $\eta$, increasing the overall complexity by this factor.
Therefore, this approach results in an overall complexity that is larger than $\cO(\eta^2)$, negating the speedup provided by either tree codes or FMMs.
The FMM approach discussed in Ref.~\cite{Childs2022quantumsimulationof}, which considers the direct translation of the classical algorithm, incurs this extra overhead of $\eta$. In this instance, the overhead results implicitly from the use of QRAM. A detailed discussion of the issues associated with implementing an FMM quantum algorithm is provided by Babbush \emph{et al.}~in Ref.~\cite{BabbushNC2023}.
(Note that the FMM is only relevant to first quantisation, because for second quantisation the approach of Ref.~\cite{low2019hamiltoniansimulationinteractionpicture} may be used.)

Our solution to recover the $\cO(\eta\log\eta)$ scaling of the classical algorithm is related to the implementation of a quantum sort \cite{MergeSort,Beals2013} in which a sorting network is used to make comparisons at fixed locations so that the overhead from accessing quantum data is avoided.
In the FMM algorithm, a tree structure is created, where each successive level of the tree partitions the simulation cell into increasingly smaller boxes, so that at the leaf level each box contains a constant number of particles.
One difficulty involved with our sorting network approach is constructing a subroutine to retrieve information from the boxes so that the appropriate interactions can be evaluated.
We solve this problem using multiple Morton orderings with shifts to ensure that all boxes in the interaction list are within a minimum distance in one order.
The methods developed here to avoid the data-access overhead show promise for application to quantum versions of many other fast summation methods \cite{YESYPENKO2025113707,Angulo2022,Fukuda2022,Wang2021,Gnedin2019,Stenqvist,liang2025}.

The fast multipole method needs large particle numbers in order to show an improvement over direct summation.
Greengard and Rokhlin show a break-even point of about 500 particles for modest accuracy to 5000 particles for high-accuracy calculations \cite{greengard1997new}.
In comparison, Ref.~\cite{RubinPNAS24} involves 1729 electrons for simulation of a deuterium plasma.
Simulation of plasma with larger atomic numbers can be expected to require simulation of thousands of electrons where the quantum fast multipole method becomes relevant.
We caution that the current quantum implementation of the FMM has large overheads, and will require considerable further optimisation to provide comparable performance to the classical FMM.
The focus of this work is on showing that the same scaling as the classical FMM is possible.

In the rest of this paper, we summarise the electronic structure Hamiltonian in Section \ref{sec:struc}, and the FMM algorithm in Section \ref{sec:fmm}. In Section \ref{sec:even}, we explain how to implement a quantum FMM algorithm for the case where the particles are assumed to be evenly distributed. We lift this assumption in Section \ref{sec:adaptive} and explain how to 
implement an adaptive FMM algorithm so that the size of each box is adapted to the local density of particles.
We estimate the total complexity in Section \ref{sec:complex}, and conclude in Section \ref{sec:conc}.

\section{Simulating quantum chemistry in real space first quantisation}
\label{sec:struc}

We begin by considering the simulation of the electronic structure problem defined on a spatial grid in first quantisation. Within the Born-Oppenheimer (BO) approximation, the electronic structure Hamiltonian is defined as
\begin{align}
\label{eq:real_space}
H & = T + U + V + \sum_{l \neq \kappa=1}^\Nuc\frac{q_l q_\kappa}{2\left\|R_l - R_\kappa\right\|}\, , \\
T & = \sum_{i=1}^{\eta} {\rm QFT}_i \left( \sum_{\mathbf{p}\in G} \frac{\left \| \vb{k}_\mathbf{p}\right\|^2}{2} \ket{\mathbf{p}}\!\!\bra{\mathbf{p}}_{i} \right) {\rm QFT}_i^\dagger \, , \\
U & = -\sum_{i=1}^\eta\sum_{l =1}^{\Nuc}  \sum_{\mathbf{p}\in G}\frac{q_l}{\left\|R_l - \vb{r}_\mathbf{p}\right\|} \ket{\mathbf{p}}\!\!\bra{\mathbf{p}}_{i} \, , \\
V & = \sum_{i\neq j=1}^\eta \sum_{\mathbf{p},\mathbf{q}\in G}\frac{1}{2\left\|\vb{r}_\mathbf{p} - \vb{r}_\mathbf{q}\right\|} \ket{\mathbf{p}}\!\!\bra{\mathbf{p}}_{i} \ket{\mathbf{q}}\!\!\bra{\mathbf{q}}_{j} \, ,
\end{align}
where ${\rm QFT}_i$ denotes the standard quantum Fourier transform applied to register $i$ and $\| \cdot \|$ denotes the Euclidean norm. Further, $l$ and $\kappa$ index nuclear degrees of freedom, while $i$ and $j$ index electronic degrees of freedom. The positions of the nuclei and electrons in real space are denoted by $R_l$ and $\vb{r}_\mathbf{p}$, respectively. The atomic numbers of nuclei, which are used to express the nuclear charges, are denoted by $q_l$. Throughout this work, we use $\eta$ and $\Nuc$ to denote the numbers of electrons and nuclei, respectively. We work in atomic units such that $\hbar$, $4\pi\varepsilon_0$, and the mass and charge of the electron are unity.

The simulation cell is specified by its grid points and the associated frequencies in the dual space given by the QFT. In particular, we define
\begin{equation}\label{eq:vardefs}
\vb{r}_\mathbf{p} = \frac{\vb{p} \, \Omega^{1/3}}{N^{1/3}}\, , \qquad \qquad \vb{k}_\mathbf{p} = \frac{2 \pi\, \vb{p}}{\Omega^{1/3}}\, , \qquad \qquad
\mathbf{p} \in G\, , \qquad \qquad G = \left[-\frac{N^{1/3}-1}{2},\frac{N^{1/3}-1}{2}\right]^3\, .
\end{equation}
Here, $\Omega$ is the volume of the simulation cell and $N$ is the number of grid points in the cell. The value of a grid point in real space is $\vb{r}_\mathbf{p}$, and the associated value of the frequency is $\vb{k}_\mathbf{p}$.
For our implementation of the FMM, the position will be encoded by natural numbers in binary for each coordinate.

This is essentially a more precise reformulation of the same representation used by Kassal \emph{et al.}~\cite{Kassal2008} in the first work on quantum simulating chemistry in first quantisation. Our approach is then to perform simulation under the electronic structure Hamiltonian by using high-order product formulae and a split-operator Trotter step, that separately evolves under $T$, and then under $U + V$. The approach we use to evolve under $T$ is essentially the same as that pursued by Kassal \emph{et al.}~\cite{Kassal2008}, resulting in $\widetilde{\cal O}(\eta)$ complexity. In cases where $\Nuc$ is small, it would be appropriate to perform simulation under the operator $U$ using the approach from Kassal \emph{et al.}~\cite{Kassal2008} as well, which results in ${\cal O}(\eta \Nuc)$ complexity. A significant innovation of our work is to compute $V$ using a quantum version of the FMM, then perform time evolution with phase kickback.
This approach can also be used to simulate $U$ in cases with larger $\Nuc$, or for the internuclear potential in non-BO simulations.

Unlike the work of Kassal \emph{et al.}~\cite{Kassal2008}, we propose to perform time evolution using high-order product formulae.
The work of Low \emph{et al.}~\cite{LowPRX2022} bounds the number of Trotter steps required for a split-operator approach in real-space second quantisation.
This result for the number of Trotter steps also holds for real-space first quantisation (see the Supporting Information of Ref.~\cite{RubinPNAS24}, Section V.B.), and is
\begin{align}
\label{eq:trotter_number}
t \left(\frac{\eta^{2/3} N^{1/3}}{\Omega^{1/3}} + \frac{N^{2/3}}{\Omega^{2/3}}\right) \left(\frac{\eta N t}{\Omega \epsilon} \right)^{o\left(1\right)}
\end{align}
for time-evolution duration $t$ and target precision $\epsilon$.
In this expression $o(1)$ indicates powers of the inverse of the order of the product formula, and so may be made arbitrarily small.
Thus, the overall complexity will be given by \eq{trotter_number} times the complexity of the Trotter step.

The complexity of the Trotter step is bottlenecked by the computation of $V$. The na\"ive method for this, described by Kassal \emph{et al.}~\cite{Kassal2008}, has complexity $\widetilde{\cal O}(\eta^2)$. But we will instead show an approach based on the fast multipole method with complexity $\widetilde{\cal O}(\eta)$.
Multiplying the number of Trotter steps by $\widetilde{\cal O}(\eta)$, this will lead to an overall gate complexity
\begin{align}\label{eq:complexity}
t\left(\frac{\eta^{5/3} N^{1/3}}{\Omega^{1/3}} + \frac{\eta N^{2/3}}{\Omega^{2/3}}\right) \left(\frac{\eta N t}{\Omega \epsilon} \right)^{o\left(1\right)}
= t\left(\eta^{4/3}N^{1/3} + \eta^{1/3} N^{2/3}\right) \left(\frac{\eta N t}{\epsilon} \right)^{o\left(1\right)} ,
\end{align}
where the right-hand side is a simplified expression if $\Omega \propto \eta$ (the thermodynamic limit).
The space complexity is ${\cal O}(\eta \log N\,\mathrm{polylog}(1/\epsilon))$ arising from the registers used to store the multipole information.
The results developed here represent roughly a speedup by ${\cal O}(\eta)$ over the most comparable prior methods. 

For a detailed comparison to prior work, see Table \ref{tab:comparison}.
For high filling fractions where $\eta$ and $N$ are similar sized, this reduces to roughly the same complexity as the best prior second-quantised plane wave algorithms developed by Su \emph{et al.}~\cite{Su2020} and Low \emph{et al.}~\cite{Low2018}.
In practice, it is expected that $N$ is significantly larger than $\eta$.
The algorithm with the best scaling for $N\gg\eta$ is the first-quantised interaction-picture method \cite{BabbushContinuum}, but that has much larger scaling with $\eta$.

To compare these complexities more precisely, we also take into account the volume $\Omega$.
The commonly considered case is the thermodynamic limit, where $\Omega$ is increased proportional to $\eta$.
In this limit, the result using the multipole approach has better complexity than that of the interaction picture method in Ref.~\cite{BabbushContinuum} given $N < \eta^7$.
(If $\Omega$ were constant, then our method would be best for $N < \eta^6$.)
As a result, the result presented here has the lowest known complexity for any approach whenever $N < \eta^7$, which is the result quoted in the abstract.
See Box \ref{box:summary} below for a summary of the method and resulting complexity.

\begin{table*}[t]
\begin{tabular}{|c|c|c|c|}
\hline
Year
& Reference
& Primary innovation
& Toffoli/T complexity\\
\hline\hline
2017
& Babbush \emph{et al}.~\cite{BabbushLow}
& Using plane waves with Trotter
& $\widetilde{\cal O}(\eta^2 N^{17/6} \sqrt{1 + \eta \Omega^{1/3}/N^{1/3}} /(\Omega^{5/6} \epsilon^{3/2}))$\\
2017
& Babbush \emph{et al}.~\cite{BabbushLow}
& Using plane waves with LCU
& $\widetilde{\cal O}((N^4/\Omega^{1/3} + N^{11/3}/\Omega^{2/3} )/\epsilon)$\\
2018
& Babbush \emph{et al}.~\cite{BGBWMPFN18}
& Linear scaling quantum walks
& $\widetilde{\cal O}((N^{10/3}/\Omega^{1/3} + N^{8/3}/\Omega^{2/3})/\epsilon)$\\
2018
& Low \emph{et al}.~\cite{Low2018}
& Interaction picture with second quantisation
& $\widetilde{\cal O}(N^{8/3} / (\Omega^{2/3} \epsilon))$\\
2018
& Babbush \emph{et al}.~\cite{BabbushContinuum}
& First quantised qubitisation
& $\widetilde{\cal O}((\eta^{3} (N/\Omega)^{1/3} + \eta^{2} (N/\Omega)^{2/3} ) / \epsilon)$\\
2018
& Babbush \emph{et al}.~\cite{BabbushContinuum}
& Interaction picture with first quantisation
& $\widetilde{\cal O}(\eta^{3} (N/\Omega)^{1/3} / \epsilon )$\\
2019
& Kivlichan \emph{et al}.~\cite{Kivlichan2020improvedfault}
& Better Trotter steps
& $\widetilde{\cal O}(N^{3} / (\Omega^{2/3} \epsilon^{3/2}))$\\
2019
& Childs \emph{et al}.~\cite{TrotComm}
& Tighter Trotter bounds
& $\cO(N^{7/3 + o(1)} / (\Omega^{1/3} \epsilon^{1 + o(1)})$\\
2021
& Su \emph{et al}.~\cite{Su2020}
& Tighter Trotter bounds for plane waves
& $N(\eta (N/\Omega)^{1/3}+ (N/\Omega)^{2/3})N^{o(1)} /\epsilon^{1 + o(1)}$\\
2023
& Low \emph{et al}.~\cite{LowPRX2022}
& Tighter Trotter in real space
& $N (\eta^{2/3}(N/\Omega)^{1/3} + (N/\Omega)^{2/3} ) (\eta N)^{o(1)} / \epsilon^{1+o(1)}$ \\
2024
& Rubin \emph{et al}.~\cite{RubinPNAS24}
& Tighter Trotter bounds in first quantisation
& $\eta^2 (\eta^{2/3}(N/\Omega)^{1/3} + (N/\Omega)^{2/3} ) (\eta N)^{o(1)} / \epsilon^{1+o(1)}$ \\
2025
& Stetina \& Wiebe~\cite{Stetina2025}
& Gauss’ law as a constraint
& $\eta^{2}(N/\Omega)^{2/3} (\eta N)^{o(1)} / \epsilon^{1 + o(1)}$ \\
2025
& \textbf{This work}
& quantum fast multipole
& $\eta (\eta^{2/3}(N/\Omega)^{1/3} + (N/\Omega)^{2/3} ) (\eta N)^{o(1)} / \epsilon^{1+o(1)}$ \\
\hline
\end{tabular}
\caption{\label{tab:comparison} Best quantum algorithms for phase estimating chemistry in a plane wave or real space basis.
For phase estimation the time $t$ is replaced with $1/\epsilon$ in the complexity.
$N$ is number of basis functions, $\eta < N$ is number of electrons, $\Omega$ is the computational cell volume, and $\epsilon$ is target precision.
}
\end{table*}

\begin{summarybox}[box:summary]{Summary of methods and results}
\begin{enumerate}
    \item Real-space first quantisation is used with $N$ grid points for $\eta$ electrons.
    \item Evolution for time $t$ with allowable error $\epsilon$ is simulated using high-order product formulae with the number of steps given in Eq.~\eqref{eq:trotter_number}.
\item The exponentials of the kinetic energy and potential are simulated by calculating these energies in the computational basis and applying phase factors.
\item The potential energy is calculated using a quantum form of the FMM in either one of two methods.
\begin{itemize}
    \item A nonadaptive method with a fixed grid (see Section \ref{sec:even}).
    The principle is that the electrons are moved into registers corresponding to the boxes for the FMM via an approach based on a quantum sort (see Algorithm \ref{alg:sortreg}).
    This method can only act on the subspace where the electrons are evenly distributed.
    \item An adaptive method that can account for an arbitrary distribution of particles (see Section \ref{sec:adaptive}).
    This works via three main principles.
    \begin{enumerate}
        \item The multipole information for each box is associated with a register for an electron, and is moved as the registers are sorted.
        \item The information from neighbouring boxes is obtained by copying it along a sorted list of electrons, according to Algorithm \ref{alg:copy}.
        \item The multipole information from all boxes in the interaction list is accessed by sorting the electrons in multiple Morton orderings as illustrated in Fig.~\ref{fig:Morton}.
    \end{enumerate}
\end{itemize}
\end{enumerate}
The order needed for the multipole method is $\trunc=\cO(\log(1/\epsilon)$, which results in the complexity for the quantum FMM for each step of the product formula being (see Appendix \ref{app:additional_fmm} and Appendix \ref{app:M2P})
\begin{equation}
    \cO(\eta\log N \log^4(1/\epsilon)).
\end{equation}
As this is linear in $\eta$, it is an asymptotic speedup over direct summation.
The qubit usage scales as
\begin{equation}
    \cO(\eta\log N \log^3(1/\epsilon)),
\end{equation}
and is primarily due to tree storage for the FMM.
\end{summarybox}

\section{Fast Multipole Method preliminaries}
\label{sec:fmm}
Here, we review the classical FMM for a system consisting of $\eta$ point particles. For the sake of clarity, we describe the non-adaptive case that is suitable for systems where the particles are roughly uniformly distributed within a cubic simulation cell.
Note that our quantum implementation does not rely on uniform distribution; in Section \ref{sec:adaptive} we describe the full adaptive algorithm.
Let $q_i$ and $\vb{r}_i$ ($i \in [\eta] \coloneqq \{1,\dots, \eta\}$) be the charge and position of the $i$th particle, respectively. Our objective is to compute the total potential energy given by 
\begin{equation}
    V = \frac 12 \sum_{i\in[\eta]}\sum_{\substack{j\in [\eta]\\j\neq i}}q_i \Phi(\vb{r}_i,\vb{r}_j)q_j = \sum_{i \in [\eta]} q_i V_i \, , \label{eq:total_potential_defn}
\end{equation}
for some kernel function $\Phi$, where $V_i = \frac 12 \sum_{j \in [\eta] \setminus \{i\}} \Phi(\vb{r}_i,\vb{r}_j)q_j$. In the case where all particles are electrons and we are interested in the Coulomb potential in 3D, we have $q_i = -1$ for all $i$ and $\Phi(\vb{r}_i,\vb{r}_j) = 1/\|\vb{r}_i -\vb{r}_j\|$ (using atomic units). Computing $V$ exactly involves summing over all pairs of particles, which requires $\cO(\eta^2)$ operations.

Instead, the FMM separates the potential experienced by each particle $V_i$ into near-field and far-field contributions,
\begin{equation}
    V = \sum_{i \in [\eta]} q_i V_i \ = \sum_{i \in [\eta]} q_i (V_{i,\mathrm{near}}+V_{i,\mathrm{far}}(\epsilon)). \label{eq:total_potential_defn_fmm}
\end{equation}
Here, $V_{i,\near}$ is evaluated exactly via summation over a number of geometrically close particles. The number of particles that are close to any given particle is bounded above by a constant.
In contrast, $V_{i,\far}$ is evaluated approximately via a Taylor expansion that rapidly converges to within a target error $\epsilon$ thanks to the low off-diagonal rank of $\Phi(\vb{r}_i,\vb{r}_j)$. The potential $V$ is evaluated in $\cO(\eta \,\mathrm{polylog}(1/\epsilon))$ operations using an appropriate choice of metric to distinguish near- and far-field contributions. This metric is defined in terms of a hierarchical tree-like division of the simulation cell, and the efficient evaluation of $V$ is typically framed in terms of the upward and downward traversal of this tree.

\subsection{Defining the tree}

An $L$-level hierarchical decomposition of the simulation cell defines a metric that determines whether the interaction between any pair of particles is accounted for in $V_{i,\near}$ or $V_{i,\far}$.
An example decomposition is illustrated in Fig.~\ref{fig:Interaction}.
This decomposition naturally maps onto a tree that is organised so that the root node (1st level) encompasses the entire simulation cell, the leaf nodes ($L$th level) partition the cell into $n_b$ boxes, and nodes at intermediate levels partition the cell into boxes with double the edge length, up to the root.
In $d$ dimensions, there are $2^{d(\ell-1)}$ boxes at the $\ell$th level of this tree.
While we will sometimes use a $d=2$ quadtree for illustrative purposes in figures, we are generally concerned with a $d=3$ octree in the evaluation of $V$ for electronic structure problems.

The governing principle in FMM is that the interaction between particles in boxes that are well separated can be efficiently evaluated by
\begin{enumerate}[label=(\arabic*)]
    \item aggregating the distribution of ``source'' particles (w.l.o.g., those indexed in the sum over $j$ in Eq.~\eqref{eq:total_potential_defn}) into multipole expansions at all levels of the tree, 
    \item translating the multipole expansions defining the attendant potential contribution into local (Taylor) expansions about boxes containing ``observer'' particles (w.l.o.g., those indexed in the sum over $i$ in Eq.~\eqref{eq:total_potential_defn}), and
    \item disaggregating the local expansions into the potential experienced by any observer particle (i.e., $V_{i,\far}$).
\end{enumerate}
The remaining contributions to the potential experienced by any observer particle (i.e., $V_{i,\near}$) are due to particles in boxes that are \emph{not} well separated. These interactions are evaluated exactly.

While the near-field interactions are strictly accounted for among leaf-level boxes, evaluating the far-field interactions takes advantage of the fact that boxes separated by increasingly larger distances are more efficiently accounted for at higher levels of the tree. Performing these higher-level calculations requires defining the interaction list for any given box $b$ -- the set of boxes that are not too close to $b$, but also not too far away. In particular, two boxes are in each other's interaction list when they are not nearest neighbours but their parent boxes (at the next level up) are.
This is illustrated in Figure \ref{fig:Interaction}.

To make this concept more rigorous, we define a few concepts. At a given level $\ell$, let $B_\ell$ be the set of all boxes at this level.
The size of $B_\ell$ in 3D is $|B_\ell| = 8^{\ell - 1}$. For a box $b \in B_\ell$, we denote the set of nearest neighbours to $b$ as $\nearn(b)$.
Here, we define the nearest neighbours of $b$ to be all boxes that share an edge or vertex with $b$. For $\ell \neq 1$, the size of $\nearn(b)$ in 3D is $|\nearn(b)| \leq 26$.
The parent box $\parent(b) \in B_{\ell-1}$ is the box one level up from $b$ such that $b$ is fully enclosed by the volume of $\parent(b)$.
We also define the child boxes of $b$ to be boxes at level $\ell+1$ that are enclosed by the volume of $b$.
The set of child boxes of $b$ is denoted $\child(b)$.
Note that $b \in \child(\parent(b))$ is always true. Additionally, let $p(b)$ be a set of integers that index the particles within a given box $b$. The interaction list of $b$ is given by 
\begin{equation}
    \inter(b) = \child(\nearn(\parent(b))) - \nearn(b) - b,
\end{equation}
where the final term is needed so that $b$ itself is not included in the interaction list. Finally, let $\mathbf{c}_b$ be a vector representing the position of the centre of box $b.$ The notation introduced here is summarised in Table \ref{tab:notation}.  

\begin{table}[t] \label{tab:notation}
  \centering
  \begin{tabular}{@{}| l | l |@{}}
    \hline
    $B_{\ell}$         & set of boxes on level $\ell$ \\
    $\nearn(b)$            & nearest neighbours of box $b$ \\
    $P(b)$             & parent box of $b$ \\
    $C(b)$             & set of child boxes of $b$ \\
    $I(b)$             & interaction list for $b$ \\
    $p(b)$             & indices of particles contained in $b$ \\
    $\mathbf{c}_{b}$   & position of the centre of box $b$ \\
    \hline
  \end{tabular}
\caption{Summary of notation for boxes.}
\end{table}

\begin{figure}
    \centering
    \includegraphics[width=0.4\linewidth]{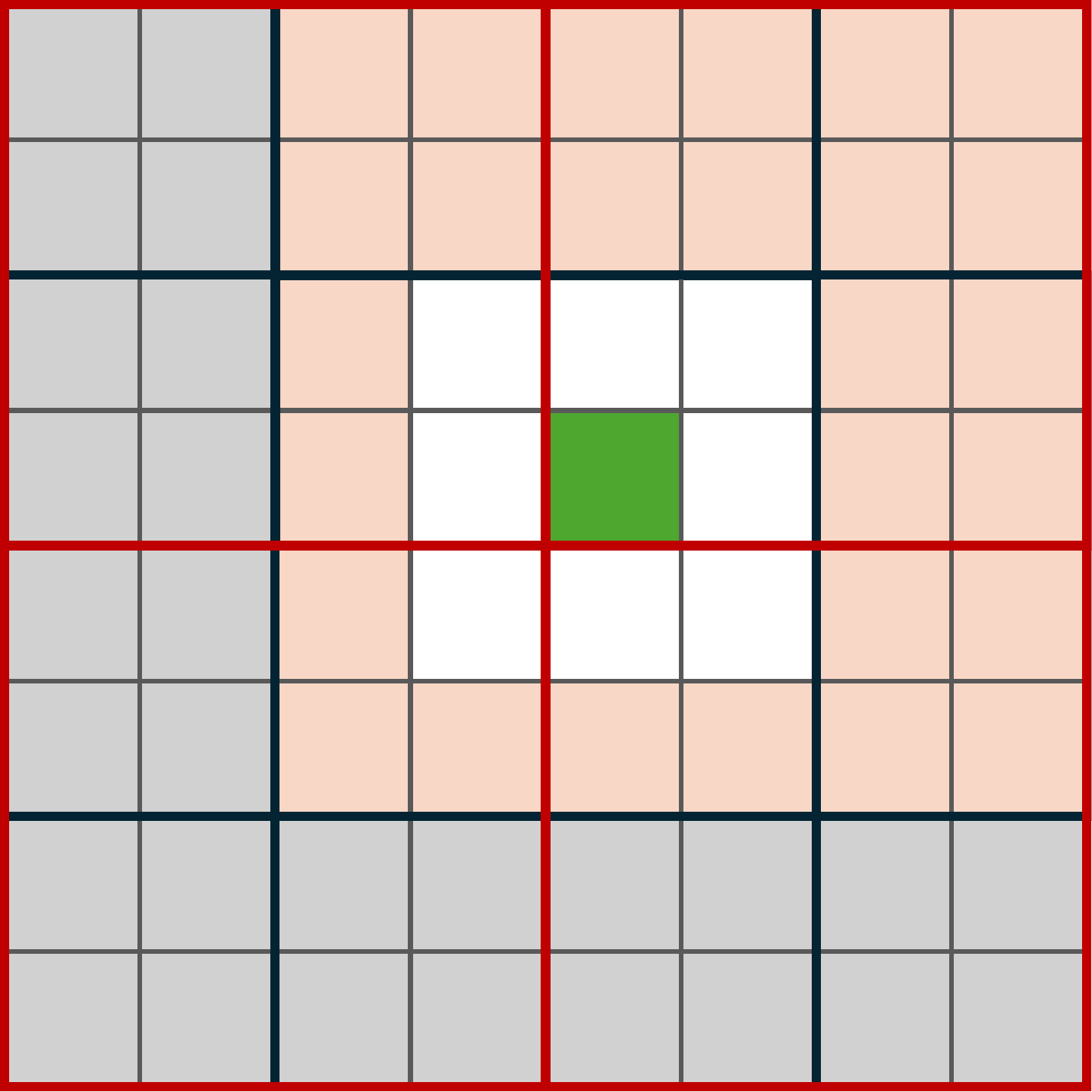}
    \caption{Illustration of hierarchical division of a simulation cell, as well as the interaction list and neighbours for an exemplary (green) box in the $\ell=4$th level of a tree. Here, the level-1 box is the complete region. The four level-2 boxes are outlined by the red lines. The 16 level-3 boxes are shown by the thick black lines and the level-2 partition. The 64 level-4 boxes are shown by the thin black lines and the level-3 partition. The box of interest $b$ is represented by the green square. The surrounding white squares are neighbours of $b$, and the boxes in $b$'s interaction list are shown in orange. The grey squares are neither nearest neighbours, nor in the interaction list. A two-dimensional simulation cell is presented for simplicity of presentation.}
    \label{fig:Interaction}
\end{figure}

The interaction list of any box has at most $6^3 - 3^3 = \cO(1)$ boxes in 3D. To see this, note that along any dimension the width of $\child(\nearn(\parent(b)))$ is $6$ boxes wide. In 3D, this yields $6^3$ total boxes. Then, notice that the width of $\nearn(b)$, including the centre box $b$, along any dimension is $3$, for a total of $3^3$ in 3D. Finally, the total number of boxes in the interaction list is given by the difference, $6^3 - 3^3=189$. There can be fewer boxes near the boundaries of the region.

\subsection{Traversing the tree}
Here, we consider an illustrative zeroth-order (monopole only) implementation, and provide additional details required to achieve arbitrary $\epsilon$ using higher orders of the multipole expansion in Appendix~\ref{app:additional_fmm}.
In Algorithm \ref{alg:FMM}, we provide pseudocode that implements a simplified FMM that returns an approximation to $V$.
This version of the algorithm only calculates the total far-field contribution to the potential between boxes and does not disaggregate those interactions to the local potential experienced by each particle.
Later, we present a version of the algorithm that calculates these interactions via disaggregation (vide infra Algorithm~\ref{alg:FMMagg}).
The algorithm consists of two passes through the hierarchical tree in which every box at levels $\ell\ge 3$ is iterated through. In the upward pass, the objective is to compute the charge $Q_b$ for each box so that it can be used in the downward pass to compute an approximation of $V$. The upward pass starts at level $\ell =\Levs$ and works up to $\ell = 3$.
At $\ell = L$ the total charge $Q_b$ of the particles in each box $b$ is computed. Then, at $\ell =\Levs-1$, the total charge of each box is computed by adding the charges of the box's $8$ child boxes computed previously. We continue in this way until $\ell=3$ is reached. If there are at most $c = \cO(1)$ particles per box at level $\Levs$, computing $Q_b$ for a given box requires $\cO(1)$ operations. The total number of operations is then $\cO(n_b)$ (omitting the $\epsilon$-dependence).

The objective of the downward pass is to compute an approximation to $V$ using the monopoles computed in the upward pass.
This is done starting at level $3$ and working up to level $\Levs$.
The first two levels are skipped because none of the boxes in $B_1$ and $B_2$ are sufficiently separated from the remaining boxes for the multipole approximation to be valid.
This is the reason the box charges are only calculated up to level $3$ in the upward pass.
For all other levels, the interactions between a given box $b$ and the boxes in its interaction list are computed using the multipole approximation.
At the leaf-level, the interactions between particles occupying the same box are computed exactly using the Coulomb potential.
Additionally, the interactions between particles in a given leaf-level box and particles in the nearest-neighbour boxes are computed exactly. 
All of the interactions computed throughout the downward pass are aggregated in a variable $V$ that is returned after the algorithm completes.

In FMM, rather than simply summing the potential energy into $V$, the downward pass translates the multipole expansions into local expansions and disaggregates those local expansions into potentials at each leaf-level box.
In Algorithm \ref{alg:FMMagg} we present a version of the FMM using this disaggregation, though still using the monopole approximation for simplicity.
It uses, for example, $V_b \leftarrow V_b + \Phi(\vb{c}_a,\vb{c}_b)Q_a$ rather than $V \leftarrow V + Q_b\Phi(\vb{c}_a,\vb{c}_b)Q_a$, with the multiplication by the box charge $Q_b$ omitted to give the box potential rather than the potential energy.
The potentials for parent boxes are copied to child boxes, and eventually to the potentials for particles.
These are then multiplied by the particle charges to determine the potential \emph{energy}.

In the monopole-only case described here, these two approaches are completely equivalent, but
in the higher-order calculation, the disaggregation simplifies the calculation.
Determining the potential energy between boxes based on the multipole (as in the direct translation of Algorithm \ref{alg:FMM}) would introduce significant additional complexity.
The complexity is reduced by using the multipole information for the boxes in the interaction list to determine the potential at box $b$.
A detailed explanation of the higher-order method is given in Appendix \ref{app:additional_fmm}.

Because each interaction list contains $\cO(1)$ boxes, and each level-$\Levs$ box has $\cO(1)$ neighbours (each with $\cO(1)$ particles), the downward pass also requires $\cO(n_b)$ operations in total. If the particles are roughly uniformly distributed, then $\Levs$ can be chosen such that $n_b = \cO(\eta)$, which enables the calculation of $V$ with complexity $\cO(\eta)$.
The complexity has further polynomial factors of $\log(1/\epsilon)$ for the full multipole calculation.
For more general distributions, an adaptive subdivision of boxes can be used to limit the number of boxes that need be considered, providing similar complexity.

\begin{algorithm}
    \caption{Classical FMM (monopole only, no disaggregation)}\label{alg:FMM}
    Upward pass: compute charges\\
    1. \For{$b\in B_\Levs$}{
    $Q_b \leftarrow \sum_{j \in p(b)} q_j$
    }
    \BlankLine
    2. \For{$\ell \leftarrow\Levs-1$ \KwTo $3$}{
    \For{$b\in B_\ell$}{
    $Q_b \leftarrow \sum_{\text{$a\in\child(b)$}} Q_{a}$
    }
    }
    \BlankLine
    Downward pass: evaluate potential\\
    initialise $V\leftarrow0$\\
    Compute far-field contribution using box-box potentials\\
    3. \For{$\ell \leftarrow 3$  \KwTo \Levs}{
    \For{$b\in B_\ell$}{
    \For{$a\in\inter(b), a > b$}{
    $V \leftarrow V + Q_b\Phi(\vb{c}_a,\vb{c}_b)Q_a$
    }
    }
    }
    \BlankLine
    Compute near-field contribution using particle-particle potentials\\
    4. \For{$b \in B_{L}$}{
        \For{$i \in p(b)$}{
            \For{$n \in \nearn(b), n>b$}{
                \For{$j \in p(n) $}{
                Compute Coulomb interactions between particles within $b$ and $n$\\
                $V \leftarrow V +q_i\Phi(\mathbf{r}_i,\mathbf{r}_j)q_j$
                }
            }
            \For{$j \in p(b), \,j>i$}{
                Compute Coulomb interactions between particles within $b$\\
                $V \leftarrow V + q_i\Phi(\mathbf{r}_i,\mathbf{r}_j)q_j$
            }
        }
    }
    Return $V$
\end{algorithm}

\begin{algorithm}
    \caption{Classical FMM (monopole only, with disaggregation)}\label{alg:FMMagg}
    Upward pass: compute charges\\
    1. \For{$b\in B_\Levs$}{
    $Q_b \leftarrow \sum_{j \in p(b)} q_j$
    }
    \BlankLine
    2. \For{$\ell  \leftarrow\Levs-1$ \KwTo $3$}{
    \For{$b\in B_\ell$}{
    $Q_b \leftarrow \sum_{\text{$a\in\child(b)$}} Q_{a}$
    }
    }
    \BlankLine
    Downward pass: evaluate potential\\
    initialise $V_b\leftarrow0$\\
    Compute far-field contribution \\
    3. \For{$\ell \leftarrow 3$  \KwTo \Levs}{
    \For{$b\in B_\ell$}{
    $V_b \leftarrow V_{\parent(b)}$\\
    \For{$a\in\inter(b), a > b$}{
    $V_b \leftarrow V_b + \Phi(\vb{c}_a,\vb{c}_b)Q_a$
    }
    }
    }
    \BlankLine
    Compute near-field contribution \\
    initialise $V\leftarrow0$\\
    4. \For{$b \in B_{L}$}{
        \For{$i \in p(b)$}{
            $V_i \leftarrow V_b$\\
            \For{$n \in \nearn(b), n>b$}{
                \For{$j \in p(n) $}{
                Compute Coulomb interactions between particles within $b$ and $n$\\
                $V_i \leftarrow V_i +\Phi(\mathbf{r}_i,\mathbf{r}_j)q_j$
                }
            }
            \For{$j \in p(b), \,j>i$}{
                Compute Coulomb interactions between particles within $b$\\
                $V_i \leftarrow V_i + \Phi(\mathbf{r}_i,\mathbf{r}_j)q_j$
            }
        $V \leftarrow V + q_i V_i$
        }
    }
    Return $V$
\end{algorithm}

\section{Quantum FMM for evenly distributed particles} \label{sec:even}
The difficulty in constructing a quantum implementation of the FMM is that we need to place electrons into boxes according to their positions, which are stored in quantum registers. This means that we need to access data according to a value stored in a quantum register. For each electron, we need to run through $\cO(\eta)$ boxes to determine the box that a given electron belongs to. Because there are $\eta$ electrons this approach results in $\cO(\eta^2)$ cost, which we are trying to avoid. We first illustrate the principles used to avoid $\cO(\eta^2)$ scaling for the simpler case where the particles are assumed to be evenly distributed. In Section \ref{sec:adaptive}, we modify our approach to allow for adaptive grids which lets us treat the general case where the evenly distributed assumption is lifted. We note that within the BO approximation the positions of the nuclei are given by classical registers and there is no difficulty moving data for the nuclei.
For non-BO simulations the registers storing the positions of the nuclei would need to be moved into boxes as well. However, this can be achieved with a similar approach as for the electron registers.

\subsection{Quantum registers for FMM data storage}
Our implementation of the quantum FMM algorithm uses several quantum registers to store information used throughout the algorithm. For a given number of electrons $\eta$, let $n_b = \cO(\eta)$ so that the maximum number of electrons per box at the leaf level is $c$ and the overall complexity is $\widetilde{\cO}(\eta)$.
For each box $b$, we have the registers:
\begin{itemize}
    \item $\ket{Q_b}$ -- the total charge (monopole) of $b$,
    \item $\ket{V_b}$ -- the potential at box $b$.
\end{itemize}
In the higher-order FMM, $Q_b$ and $V_b$ will be replaced by the multipole and local expansion coefficients, respectively.
In the monopole case, registers for $V_b$ are not strictly needed because we could perform the algorithm in the simplified form in Algorithm \ref{alg:FMM}.
Quantum registers for $V_i$ (the potentials at each particle) are not needed, because any particle's far-field contribution to the potential energy can be directly added into $V$ without storing each $V_i$.

In addition to these registers, boxes at the leaf level have the additional registers:
\begin{itemize}
    \item $\ket{\mathbf{r}_i}, \,\,i\in p(b)$ -- $c$ registers storing the position of each particle,
    \item $\ket{q_i}, \,\, i \in p(b)$ --  $c$ qubits which flag whether each of the $|\mathbf{r}_i\rangle$ has been populated with information for an electron.
\end{itemize}
That is, at the leaf level each of the $n_b$ boxes is given $c$ registers, which are used to store the positions of the electrons that occupy the box. Additionally, each of these $n_b c$ position registers has an associated flag register that is set to $|1\rangle$ if the position register is occupied, and is $|0\rangle$ otherwise.
More generally, we can allow $|q_i\rangle$ to encode the charge for a nucleus, if we include nuclei in the FMM.
We group together these position and flag registers as what we will call a ``particle'' register, and will consider operations moving data that move both the position and flag registers together.

Although we have described the particle registers with respect to the leaf-level boxes, they are used for boxes at all levels of the tree.
In particular, the registers of box $b$ at the leaf level are also used for the boxes in $P(b)$.
Hence, each box in $\ell = L-1$, has $2^dc$ particle registers for dimension $d$.
For the description of the quantum FMM it is convenient to regard these registers to be arranged in the same way as the spatial locations.

Each position $\vb{r}$ is in a space $G$ of $N$ grid points.
We encode each $\vb{r} \in G$ as $\ket{\vb{r}} = \ket{x}\ket{y}\ket{z}$ where each $\ket{\alpha}$ ($\alpha \in \{x,y,z\}$) is a binary representation of the spatial coordinate $\alpha$.
This uses $3 \lceil\log(N^{1/3})\rceil$ qubits (where we adopt the usual quantum information convention that logs are to base 2).
Position registers that do not encode an electron are set to the all-zero state.
When interleaving the bits for the Morton ordering, we encode the position as $\ket{\vb{r}}=\bigotimes_{k=1}^{n_{p}}\ket{x_{k}}\ket{y_{k}}\ket{z_{k}}$ for $k\in\{1,2,..., L,...,n_{p}\}$, where $n_{p}=\lceil\log(N^{1/3})\rceil$.

We store the position by natural numbers starting from zero, rather than numbers in the range $[-(N^{1/3}-1)/2,(N^{1/3}-1)/2]$ as indicated by $G$ in Eq.~\eqref{eq:vardefs}.
This is because the natural numbers are more convenient for defining the tree structure for the FMM.
This encoding is for the position, but
for the kinetic part of the Hamiltonian we need momenta that are symmetric about zero.
In the implementation of the Hamiltonian this can be accounted for in how the Fourier transform is implemented.

With this encoding of the position, we can easily determine which box an electron should be placed in based on the encoding of its position.
For example, at the highest level the triple of the first (most significant) bits $(x_{1},y_{1},z_{1})$ of the $x$, $y$ and $z$ coordinates of the electrons naturally partitions the box into 8 equal sections.
Each of these 8 boxes is further subdivided into 8 more boxes by the triple of next most significant bits $(x_{2},y_{2},z_{2})$.
This can then be continued up to the triple $(x_{L},y_{L},z_{L})$.
As a result, we do not need additional resources to determine which box a given electron should be moved to.

\subsection{Moving electron data to the correct  registers}
Before computing the potential using the FMM, we need to move the data for each of the $\eta$ electrons into the position registers for the appropriate leaf-level boxes.
This is achieved in a recursive way, where if the electron data is in the correct boxes at level $\ell$, then within each box the electron data is moved into the correct child boxes by a procedure of sorting the data and swapping it into the appropriate child box.
We first describe how to move electron position data into the correct boxes for the 1D case, then generalise to 3D. 
Figure \ref{fig:algo2} illustrates this procedure for the 1D case.

First, the system registers storing the electron positions are used as the first $\eta$ of the particle registers, with the flag registers initialised to $\ket{1}$.
The remaining $n_b c -\eta$ particle registers are initialised to the all-zero state, so $|\mathbf{0} \rangle = |0\dots 0\rangle$ and $|0\rangle$ for the position and flag registers, respectively.
Next, the particle registers are sorted according to the position.
The complexity of the sort is $\cO(\eta\log\eta)$ comparisons and controlled swaps, which yields a gate complexity of $\cO(\eta\log\eta\log N)$.
The convention for each sort is that the flag registers are used to ensure that the populated registers are moved to the left.
Note that we are sorting the particle registers, so the flag registers are moved with the position registers.

After sorting, we divide the complete simulation cell (level-1 box) into two equally sized level-2 boxes. Then, the data in the registers is swapped so that the electron positions are placed in the appropriate level-2 boxes. To achieve this task, we run through all $n_b c /2$ registers in the left box, or $\min(n_b c,\eta)$ locations accounting for the fact that not all of these registers need contain data for electrons. For each position register, we check the most significant qubit of $\vb{r}$. If the qubit is in the state $|0\rangle$, the electron belongs to the left box and does not need to be swapped. If the qubit is in the state $\ket{1}$, the electron belongs to the right box, and we swap it. If the electron was in register $m$, we swap the data to register $m + n_b c/2$, which is register $m$ of the right box.
As before, we are swapping the particle register, with both the position and flag registers.
The complexity of the controlled swaps is $\cO(n_b c\log N)=\cO(\eta\log N)$.

When the data of a register is swapped to the right, it is important that the register it is swapped with does not contain an electron position (so the flag register is in the $\ket{0}$ state). 
This can be proven for our procedure in the following way.
If there are initially no electrons in the registers for the right box, then it is trivially true.
If there are electrons initially in the registers for the right box,
then all data in registers $m$ to $n_b c/2-1$ needs to be swapped to the right (counting from 0).
This is because the registers are initially sorted.
If it were the case that there was an electron already stored in register $m+n_b c/2$, then it would mean that there were more than  $n_b c/2$ electrons that should be in the right box.
This is more electrons than can be stored in the registers, violating the even distribution assumption in this section.
Hence, this procedure never swaps data for an electron from the right box into the left box.

After swapping the data to the correct registers, we perform a sort on the registers of the right box with complexity $\cO(\eta \log\eta\log N)$. The left box is already guaranteed to be sorted because the order was not changed. Each of the two level-2 boxes is then subdivided into two level-3 boxes.
We then perform the controlled swaps within each level-2 box to ensure the electrons are within the correct level-3 boxes.
We proceed in this way until we swap the data into the correct level-$\Levs$ boxes. There are $\log n_b$ levels, each with leading-order complexity for the sort $\cO(\eta \log\eta\log N)$, as well as $\cO(\eta\log N)$ complexity for controlled swaps, for a total complexity
\begin{equation}
    \cO(\eta \log^2\eta \log N) \, .
\end{equation}

In three (or any number of) dimensions, we can apply the result for one dimension by choosing an ordering for the boxes.
For example, we can use a Morton ordering where the bits of the $x$, $y$, and $z$ directions are interleaved. Alternatively, we can give the bits for $x$ first, then $y$, then $z$. For either ordering, the same procedure can be used as in the 1D case to move the electrons to the correct boxes. The only difference is that the interpretation of the subdivision of the region into two corresponding to child boxes no longer holds.

\begin{figure}
    \centering
\includegraphics[width=\textwidth]{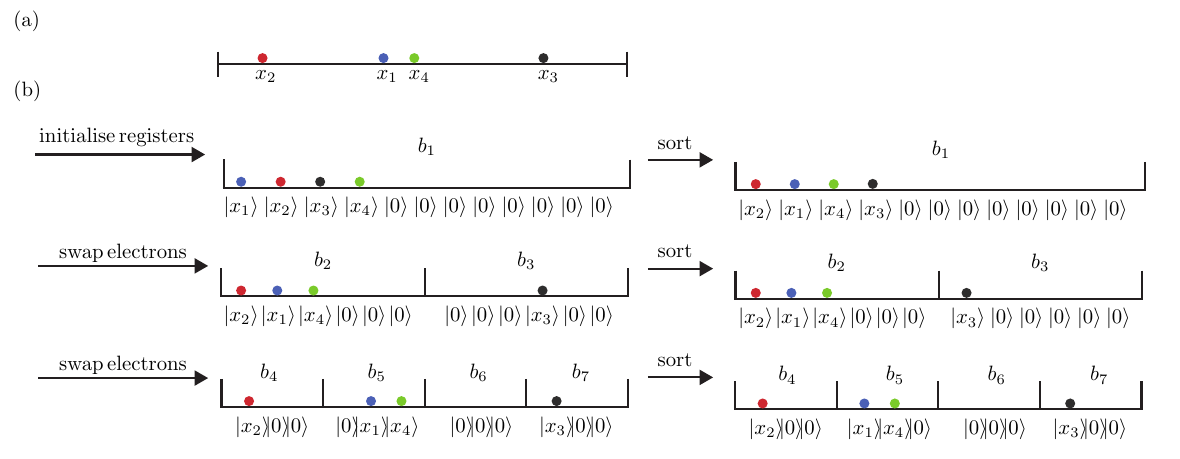}
    \caption{1D demonstration of Algorithm \ref{alg:sortreg}. Here $\eta=4$, $L=3$, $c=3$, and $n_b = 4$. (a) The positions of the electrons are shown on the real space grid, where, $x_i$ is the position of the $i$th electron. (b) Snapshots of the boxes and position registers throughout Algorithm \ref{alg:sortreg}. The label of a given box is shown above and the corresponding registers are shown below. The flag registers are not depicted. Each row iterates through the outer-most loop of Algorithm \ref{alg:sortreg} for $\ell=1,2,3$, so the successive rows show the division of the region into boxes. For a given $\ell$, the iteration starts by swapping position data into the correct registers. For $\ell=1$, this step is replaced by an initialisation step. Next, the positions are sorted within each box. This procedure is repeated until $\ell=L$ is reached.}
    \label{fig:algo2}
\end{figure}

The complete procedure is described in Algorithm \ref{alg:sortreg}.
In this algorithm, it is assumed that there are initially $8^L c$ position and flag registers. The $\eta$ electron position registers are input as part of these $8^L c$ registers with the associated flag registers set to $|1\rangle$. All other position registers are set to $|\mathbf{0}\rangle$, and flag registers to $|0\rangle$. All sorts used within this algorithm are quantum sorts, and move the flag registers together with the position registers, as well as moving occupied registers to the left.

\begin{algorithm}
    \caption{Sorting electron registers}\label{alg:sortreg}
    \For{$l \gets 0$ \KwTo $3(\Levs-1)-1$}{
	\For{$b\gets 0$ \KwTo $2^l-1$}{
		$\boxwid \gets c\times 2^{3(L-1)-l}$\\
		$\firstbox \gets b\boxwid$\\
		Sort registers $\firstbox$ to $\firstbox+\boxwid-1$.\\
		\For{$j\gets \firstbox$ \KwTo $\firstbox+\boxwid-1$}{
			\If{$p_{j} \ge (\firstbox+\boxwid/2)/c \times N/2^{3(L-1)}$}{
				Swap register $j$ with register $j+\boxwid/2$.
			}
		}
	}
}
\end{algorithm}

Here, the quantity $l$ counts the number of times the region has been divided by 2, rather than the level, and  $b$ is an integer corresponding to which region is being considered, rather than a box of the octree.
The index $j$ counts through the registers in the sorted sequence, and $p_j$ is the value in that register interpreted as an integer, which may, for example, correspond to interleaved bits of $x$, $y$, and $z$ for Morton ordering.
The test $p_{j} \ge (\firstbox+\boxwid/2)/c \times N/2^{3(L-1)}$ is used to check if the register is in the wrong half of the region.

\subsection{Calculation of the potential given correct data locations}

Given that the electrons are placed in the correct registers, it is straightforward to calculate the potential via the FMM, directly translating the steps in Algorithm \ref{alg:FMM} into quantum gates.
The algorithm as described above can be directly translated into a quantum algorithm using standard methods of performing coherent quantum arithmetic.
The parts of the calculation that directly use the information for the electrons in the boxes are the first step of loop 1 (which sums the charges in a box for $\ell=\Levs$), and loop 4 (which sums the potential over electrons in neighbouring boxes).

For these calculations we iterate over all $c$ registers in each box, but only add the contribution if the corresponding flag qubit stores $\ket{1}$.
In particular, for $Q_b \leftarrow \sum_{j \in p(b)} q_j$, we sum the values stored in these flag registers which gives the total number of electrons in the box.
Then, for $V_i \leftarrow V_i + \Phi(\vb{r}_i,\vb{r}_j)q_j$, we need not directly store $V_i$, and instead add $q_i\Phi(\vb{r}_i,\vb{r}_j)q_j$ into $V$.
This means that we perform a Toffoli on the flag registers for electrons $i$ and $j$, and use the result to control addition of the potential term $\Phi(\vb{r}_i,\vb{r}_j)$.

The electron-nuclear potential for $U$ is easily accounted for by adding the nuclear charges for the boxes in loop 1 of Algorithm \ref{alg:FMM}, and for $V_i \leftarrow V_i +  \Phi(\vb{r}_i,\vb{r}_j)q_j$ in loop 4, summing over nuclear charges as well.
That is, $q_j$ also includes the nuclear charges, and we can use $q_i$ to control adding $\Phi(\vb{r}_i,\vb{r}_j)q_j$ into the total potential.
The sum over nuclear charges is easily performed because the positions are given classically, so it is known which nuclei are within each box without requiring any quantum data accesses.

The remaining steps involve computations on other registers associated with boxes, which are independent of how many particles are in each box.
These include summing charges from child boxes (loop 2), and summing contributions to the potential from boxes in the interaction list (loop 3).
These operations all involve additions, except loop 3 also requires a multiplication.
The value of $\Phi(\vb{c}_a,\vb{c}_b)$ used in loop 3 does not need to be computed using coherent arithmetic, because we classically iterate over the box numbers $a$ and $b$, and the values of $\Phi(\vb{c}_a,\vb{c}_b)$ can be precalculated classically.
The only case where potentials need be calculated via coherent arithmetic is for the interparticle potentials in loop 4.

Overall the complexity of these calculations is as follows.
\begin{itemize}
    \item The complexity of adding the numbers of electrons in the $n_b$ boxes is $\cO(n_b c)$.
    \item For adding the charges, at level $\ell$, there are $3(\Levs-\ell)+\lceil\log c\rceil$ qubits to perform arithmetic on, and $n_b/2^{3(\Levs-\ell-1)}$ additions to perform.
    Summing this complexity over $\ell=1$ to $\Levs-1$ gives total complexity $\cO(n_b \log c)$.
    \item At level $\ell$, $Q_a$ has $3(\Levs-\ell)+\lceil\log c\rceil$ qubits, so the complexity of multiplying by the potential with $\cO(\log(1/\epsilon))$ qubits is $\cO([3(\Levs-\ell)+\log c]\log(1/\epsilon))$.
    Summing this complexity over the boxes and levels gives total complexity $\cO(n_b \log c \log(1/\epsilon))$.
    \item For each leaf-level box there are $\cO(c^2)$ pairwise potentials to approximate, for a total of $\cO(n_b c^2)$
    for all boxes.
    The values of $\Phi(\vb{r}_{i},\vb{r}_{j})$ can be approximated by using a QROM for interpolation combined with Newton's method \cite{Haner2018}.
    In practical usage the number of iterations of Newton's method can be constant, but the number of iterations for error $\epsilon$ is $\cO(\log\log(1/\epsilon))$.
    Each step of Newton's method uses multiplication, with complexity of $\cO(\log^2(1/\epsilon))$, yielding a total complexity
    \begin{align}
        \cO\left(n_b c^2 \log^2(1/\epsilon)\,\log\log(1/\epsilon)\right).
    \end{align}
    \item Multiplying $V_i$ by $q_i$ for each electron and adding into $V$ corresponds to a controlled addition, so has total complexity $\cO(\eta\log(1/\epsilon))$.
\end{itemize}
Out of these complexities, the dominant complexity is that from summing the pairwise potentials between electrons, which for $n_b=\cO(\eta)$ and $c$ a constant is
\begin{equation}
    \cO(\eta \log^2(1/\epsilon)\,\log\log(1/\epsilon)) \, .
\end{equation}
Note that in this analysis we allow the error from each step in the calculation to be $\cO(\epsilon)$, which is why the number of bits for the arithmetic is taken to be $\cO(\log(1/\epsilon))$.
In practice the error in each step will need to be smaller to ensure the total error is no larger than $\epsilon$, but that has no effect on the complexity given in the form of an order scaling in $\log(1/\epsilon)$.
When including higher orders in the multipole expansion, the order should be taken to be $\cO(\log(1/\epsilon))$, the number of multipole moments scales as the square of the order, and each should be given to $\cO(\log(1/\epsilon))$ bits.
In addition, $V_b$ is replaced with a number of local expansion coefficients scaling as the square of the order, and there is additional arithmetic needed to translate this information to child boxes.
These features result in a higher power in the scaling with $\log(1/\epsilon)$; see Appendix \ref{app:additional_fmm} for details.

\section{Adaptive boxes for arbitrary distributions of particles}
\label{sec:adaptive}

In the case where particles are distributed in an uneven way, the boxes are chosen in an adaptive way.
That is, instead of subdividing the boxes down to a fixed level, subdivide boxes that contain multiple particles.
That enables boxes of a very small size, but no more than $\eta$ of these boxes need be retained because the others have no particles.
The difficulty is that the boxes need to be determined based on the numbers in registers, which would mean data accesses at locations according to these values in a simple translation of the classical method.
That would result in an overall complexity of at least $\eta^2$, eliminating the speedup.
In order to implement the procedure with a speedup we again need a way to perform the computation with data accesses at fixed locations.

\subsection{Calculating charge / multipole information}
\label{sec:calcmult}
First we consider how to calculate the charge and multipole information for the boxes, then consider the more difficult question of how to compute the potential information, which requires locating data for boxes in the interaction list.
The general principle we use is, for each particle include a set of data registers for every box that the particle is contained in.
In contrast to the non-adaptive case, we do not include a flag qubit or additional unoccupied particle registers.
We first explain the method in one dimension, then how to generalise it to multiple dimensions.
We also describe the method for the electron-electron potential $V$ first, then an amendment for the electron-nuclear potential.

The difficulty is primarily in finding the data locations for boxes in order to compute the results for multiple boxes.
The method we use to solve this is to copy data along the chain of  registers for electrons, so that data for one box is copied to a neighbouring box in order to perform the calculation.
In one dimension, the electron registers are first sorted into order.
Then we can have a sequence of electron registers that are part of one box at level $\ell$, followed by electron registers in a neighbouring box (or possibly another box).

In one dimension, we would calculate the charge for the box at level $\ell-1$ by adding the charges of the child boxes at level $\ell$.
Each box at level $\ell$ has a register containing that charge information for level $\ell$, but that information needs to be copied over to the registers for charges in the next child box in that level in order to compute the charge at level $\ell-1$.
Similarly, the multipole information needs to be copied over to compute the multipole information at level $\ell-1$.

In particular, let us denote the charge information for the box that electron register $j$ is in at level $\ell$ by $Q(j,\ell)$.
We explain the procedure for charge, and the procedure for multipole information is equivalent, except the operations to combine the multipole information are more complicated.
Let us also denote by $\boxno{j,\ell}$ the bits up to $\ell$ of the electron register, which is the number of the box at level $\ell$.
We take $\Levs$ to be the total number of bits for the positions of the electron registers, so all electrons are in different boxes at level $\Levs$.
The procedure is then as in Algorithm \ref{alg:calccharg}.
This is the quantum implementation of the upward pass of the FMM in Algorithm \ref{alg:FMMagg}.

\begin{algorithm}
    \caption{Calculating charge information}\label{alg:calccharg}
1. \For{$j\gets 0$ \KwTo $\eta-1$}
    {$Q(j,\Levs)\gets 1$ \qquad (corresponding to a single charge for this finest-level box)}
2. \For{$\ell\gets L-1$ \KwTo $3$ \KwBy $-1$}{
    (a) \For{$j\gets 0$ \KwTo $\eta-2$}{
        \If{$\boxno{j+1,\ell}=\boxno{j,\ell}$}{
            \uIf{$\boxno{j+1,\ell+1}\ne \boxno{j,\ell+1}$}{
                $Q(j+1,\ell) \gets  Q(j,\ell+1)$
            }
            \ElseIf{$\boxno{j,\ell+1} \mod 2 =1$}{
                $Q(j+1,\ell) \gets  Q(j,\ell)$
            }}}
    {(b) \For{$j\gets \eta-2$ \KwTo $0$ \KwBy $-1$}{
        \If{$\boxno{j,\ell} = \boxno{j+1,\ell}$}{
            \uIf{$\boxno{j,\ell+1}\ne \boxno{j+1,\ell+1}$}{
                $Q(j,\ell) \gets  Q(j+1,\ell+1)$
            }
            \ElseIf{$\boxno{j,\ell+1}\mod 2 =0$}{
                $Q(j,\ell) \gets  Q(j+1,\ell)$
            }}}}
    (c) \For{$j\gets 0$ \KwTo $\eta-1$}{
        $Q(j,\ell) \gets  Q(j,\ell)+ Q(j,\ell+1)$
}}
\end{algorithm}

This algorithm first sets the charge at the leaf level.
Then in 2(a) it runs forward through the sequence of registers, and checks if the neighbouring registers are for the same box at level $\ell$ but different boxes at level $\ell+1$.
If they are, then it copies the charge from the left level-$\ell+1$ box into the register for the right level-$\ell+1$ box, but in the register being used for level-$\ell$ information.
The other possibility is if the least significant bits of $\boxno{j+1,\ell+1}$ and $\boxno{j,\ell+1}$ are equal to $1$ (checked using mod $2$), in which case we have a pair of registers in the right level-$\ell+1$ box.
Then the appropriate information will be in $Q(j,\ell)$, and is just copied into $Q(j+1,\ell)$.

In this way, every time we hit a boundary between two level-$\ell+1$ boxes that need their charges summed, we copy the charge information from the left box into the sequence of registers in the right box.
If it happened that there were no electrons in the left box, then the $\boxno{j+1,\ell}=\boxno{j,\ell}$ test would fail, and we would just leave the $Q(j+1,\ell)$ register as zero.
Then 2(b) just does the same procedure running from right to left, copying information from the right boxes into the left boxes.
This ensures that at level $\ell$, $Q(j,\ell)$ contains the charge from the \emph{other} level-$\ell+1$ box.
Then 2(c) adds the charges, so $Q(j,\ell)$ contains the total charge for the level-$\ell$ box (adding the two charges of the child boxes at level $\ell+1$).
For the multipole information the procedure is identical, except the operation in 2(c) to combine multipole information is not just addition.

The same principle can be used in higher dimensions.
In 1D the division into boxes is simply repeated division of boxes into two.
In 2D we can take the complete region, divide it into two in the $x$-direction, then into two in the $y$-direction, then in the $x$ direction and so on.
The principle is that we are still dividing into two at each stage, even though we would be aiming to divide into 4 and use the information from four squares at each level to compute $Q$ for one level up.
Similarly, in 3D we would subdivide the region in the $x$ direction, then $y$ direction then $z$, and repeat.

This is equivalent to interleaving the bits of the $x$, $y$, and $z$ components of the electron position.
Interleaving the bits of the three coordinates and sorting yields the particles in Morton ordering.
When we do this we can perform exactly the same procedure as in the 1D case with the interleaved bits of $x,y,z$ replacing the bits of $x$.
At each stage we would add charge from two neighbouring regions rather than 8, but three levels give addition of the charges from 8 sub-cubes.
Algorithm \ref{alg:calccharg} is unchanged, except $\boxno{j,\ell}$ is now given by the bits up to $\ell$ of the electron register with bits for the $x$, $y$, and $z$ coordinates interleaved, and $\Levs$ would be the total number of bits (but three times the number of levels).

For the multipole information there is a translation used as well when aggregating the information for the boxes to the next level up.
In 2D or 3D this translation will be to the centre of the parent box, even though we are adding together information for each direction separately.
This means that the translation operation need only be performed once for each change in level, similar to the classical FMM.

Note that when we use the registers with $Q(j,\ell)$ for calculating the potential, we want $\ell$ to be the usual FMM level.
This means that we need to adjust the indexing, which just corresponds to a choice of labelling of the registers, rather than requiring quantum operations.
We assume that this has been done when using $Q(j,\ell)$ in later steps.

\subsection{Calculating the potential}
\label{sec:calcpotl}

The principle is to run through the registers for all electrons and copy the information for one region over to the neighbouring region.
This is similar to what we have already done for the charge, but when calculating the potential we have the further complication that it is more difficult to find the region.
When considering the charge, we are able to combine two regions in the $x$ direction, then $y$ direction, then $z$ direction.
For the potential we need to find particles in regions in the interaction list.

The geometry of the interaction list is illustrated in Fig.~\ref{fig:Interaction} for two dimensions.
The box of interest (in green) at level-$\ell$ is part of a level-$\ell-1$ box indicated with thick black lines.
The other level-$\ell$ boxes inside that box are neighbours, so we do not add the potential.
That is in contrast to the method for charge, where we add charges for sub-boxes.
The boxes where we do add the potentials, indicated in orange, are part of the neighbouring level-$\ell-1$ boxes.

Not only that, but we have a row of three of these larger boxes.
This means that they are not just contained in one level-$\ell-2$ box, but overflow into neighbouring boxes at that level.
Nevertheless, we can still use a similar approach for copying information from neighbouring regions, which we will initially explain for the 1D case.

\subsubsection{Accessing interaction list for 1D case}

For each electron, we have extra registers for one, two, and three positions over; we will call these $Q_{1}(j,\ell)$, $Q_{2}(j,\ell)$, and $Q_{3}(j,\ell)$.
We run through the electrons in sequence, and for each we check if the following electron corresponds to a new region; that is $\boxno{j,\ell}\ne \boxno{j+1,\ell}$.
The cases are then as follows.
\begin{enumerate}
    \item If $\boxno{j,\ell} = \boxno{j+1,\ell}$, then both electrons are in the same region, and we just directly copy over the charge information.
    That is, $Q_{k}(j+1,\ell) = Q_{k}(j,\ell)$ for $k\in\{1,2,3\}$.
    \item If $\boxno{j+1,\ell}= \boxno{j,\ell}+1$, then the next electron is in the next region, and we copy the charge information to the next electron's register for the neighbouring region; that is, $Q_{1}(j+1,\ell) = Q(j,\ell)$.
    We also copy the charge information for one location over to that for two locations over, and that for two regions over to that for three regions over,
    so $Q_{2}(j+1,\ell) = Q_{1}(j,\ell)$ and $Q_{3}(j+1,\ell) = Q_{2}(j,\ell)$.
    \item If $\boxno{j+1,n}= \boxno{j,n}+2$, then the next electron is two regions displaced, and we directly copy into the register for two regions over, so $Q_{2}(j+1,\ell) = Q(j,\ell)$.
    We also copy the information for one region over to that for three regions over, so $Q_{3}(j+1,\ell) = Q_1(j,\ell)$.
    We leave $Q_{1}(j+1,\ell)$ zeroed, because that region is missing.
    \item If $\boxno{j+1,n}= \boxno{j,n}+3$, then the next electron is three regions displaced, and we directly copy into the register for three regions over, so $Q_{3}(j+1,\ell) = Q(j,\ell)$.
    We leave both $Q_{1}(j+1,\ell)$ and $Q_{2}(j+1,\ell)$ zeroed, because those regions are missing.
    \item If $\boxno{j+1,\ell}>\boxno{j,\ell}+3$, then the next electron is more than three regions displaced, and $Q_{1}(j+1,\ell)$, $Q_{2}(j+1,\ell)$, and $Q_{3}(j+1,\ell)$ can be left zeroed.
\end{enumerate}

After doing this, we have associated with each electron, registers giving the charge information for the boxes to the left of it, which can be used in computing the potential.
We can then invert the above procedure to erase those working ancillas.
To describe the procedure in a more general case where we copy information from up to $K$ boxes over, we use registers $Q_{k}(j,\ell)$ for $k=1$ to $K$, and it is also convenient to use $Q_{0}(j,\ell)$ to denote the charge information for the box itself.
Then the general procedure for copying information from neighbours is as shown in Algorithm \ref{alg:copy}.

\begin{algorithm}
    \caption{Copying information for $K$-neighbours}\label{alg:copy}
    \For{$j\gets 0$ \KwTo $\eta-2$}{
        \uIf{$\boxno{j+1,\ell} = \boxno{j,\ell}$}{
            \For{$k\gets 1$ \KwTo $K$}{
                $Q_{k}(j+1,\ell) \gets Q_{k}(j,\ell)$}
        }
        \ElseIf{$\boxno{j+1,\ell} \le \boxno{j,\ell}+K$}{
            $\Delta k \gets \boxno{j+1,\ell} - \boxno{j,\ell}$\\
            \For{$k\gets \Delta k$ \KwTo $K$}{
                $Q_{k}(j+1,\ell) \gets Q_{k-\Delta k}(j,\ell)$}
        }
    }
\end{algorithm}

In this algorithm, we run through the registers from $0$ to $\eta-2$, with the requirement that the registers are sorted so $\boxno{j,\ell}$ (bits up to $\ell$ of the position) are in ascending order.
Then $\boxno{j+1,\ell} = \boxno{j,\ell}$ implies that the next register is in the same box, so we copy over the information for the neighbours of that box.
If $\boxno{j+1,\ell} \le \boxno{j,\ell}+K$, then we have stepped over to another box, but by no more than $K$.
Then we copy over the appropriate neighbour information, skipping some if $\Delta k>1$, which means that there were empty boxes skipped over in going from particle $j$ to $j+1$.

To access information from all boxes in the interaction list, we can also perform the copying in the reverse direction.
Although copying in the reverse direction is needed to access information from all boxes in the interaction list, it is not necessary to calculate the potential.
This is because, for each box $a$ in the interaction list of box $b$, $b$ will be in the interaction list of $a$.
Copying the information in the reverse direction will only give contributions to the total potential energy that have already been accounted for, so is not needed.

Algorithm \ref{alg:copy} can also be used to calculate the charge / multipole information in 3D, instead of using the intermediate levels in the simplified method described above.
That is, replace steps 2(a) and 2(b) in Algorithm \ref{alg:calccharg} with Algorithm \ref{alg:copy} (with $K=7$) to copy information from registers that are in the same box at level $\ell$ but different at level $\ell+1$.
This will retrieve information for the 8 boxes at level $\ell+1$ necessary to calculate the charge / multipole information for each box at level $\ell$.

\subsubsection{Accessing interaction list for higher dimensions}

In order to obtain the information from the boxes in the interaction list in two or three dimensions, we can use a similar approach interleaving bits as we use for charge, but it is more challenging because the boxes in the interaction list need to be accessed by taking a 1D path through this higher-dimensional space.
To address this issue, we use Morton ordering, as well as bit shifts.
In particular, the Morton ordering is obtained by interleaving the bits for the $x$, $y$, and $z$ directions and sorting.

In Figure \ref{fig:Morton} the level-$\ell-2$ box is shown in red, with four different alternatives for where it can be situated relative to the box of interest, $b$.
The Morton ordering means that iterating through the particles in the sorted list will run through the boxes in the sequence indicated by the blue arrows.
Therefore, we can use the approach above for 1D, but use temporary registers to account for boxes in the interaction list displaced by up to 15 positions (in 2D) or 63 positions (in 3D).

Note that the boxes in the interaction list may be before or after the target box $b$ in the ordering, so we would need to go through the sorted list both forwards and backwards to access information from all boxes in the interaction list.
However, there is a similar consideration here as for the 1D case.
Because $a\in\inter(b) \implies b\in \inter(a)$, going through the sorted list backwards would only yield contributions to the potential that have already been accounted for when running through the list forwards, and so the reverse direction is not needed.

\begin{figure}
    \centering
\sidesubfloat[]{\includegraphics[width = 0.3\textwidth]{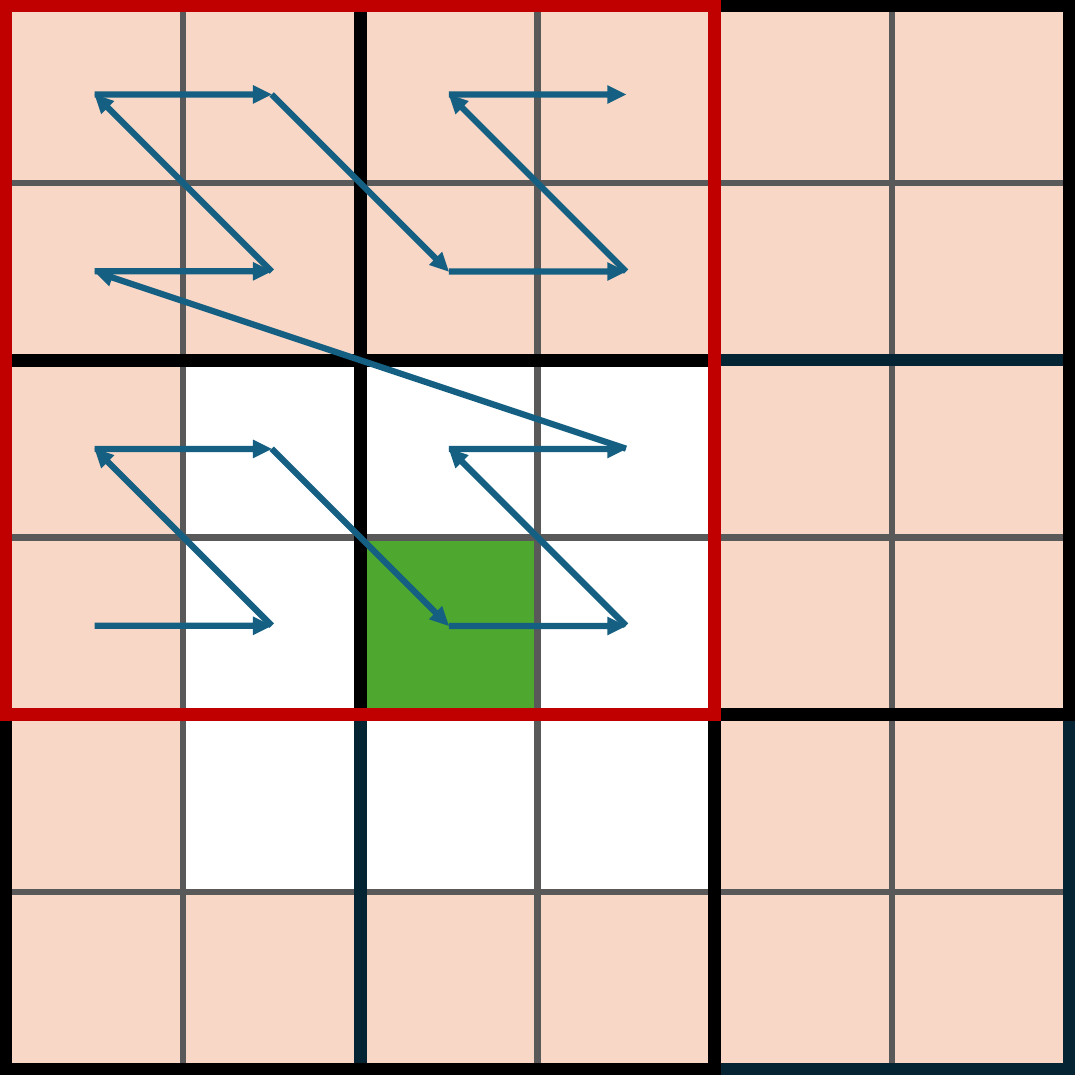}\label{fig:a}}\hspace{2.5mm}
   \sidesubfloat[]{\includegraphics[width = 0.3\textwidth]{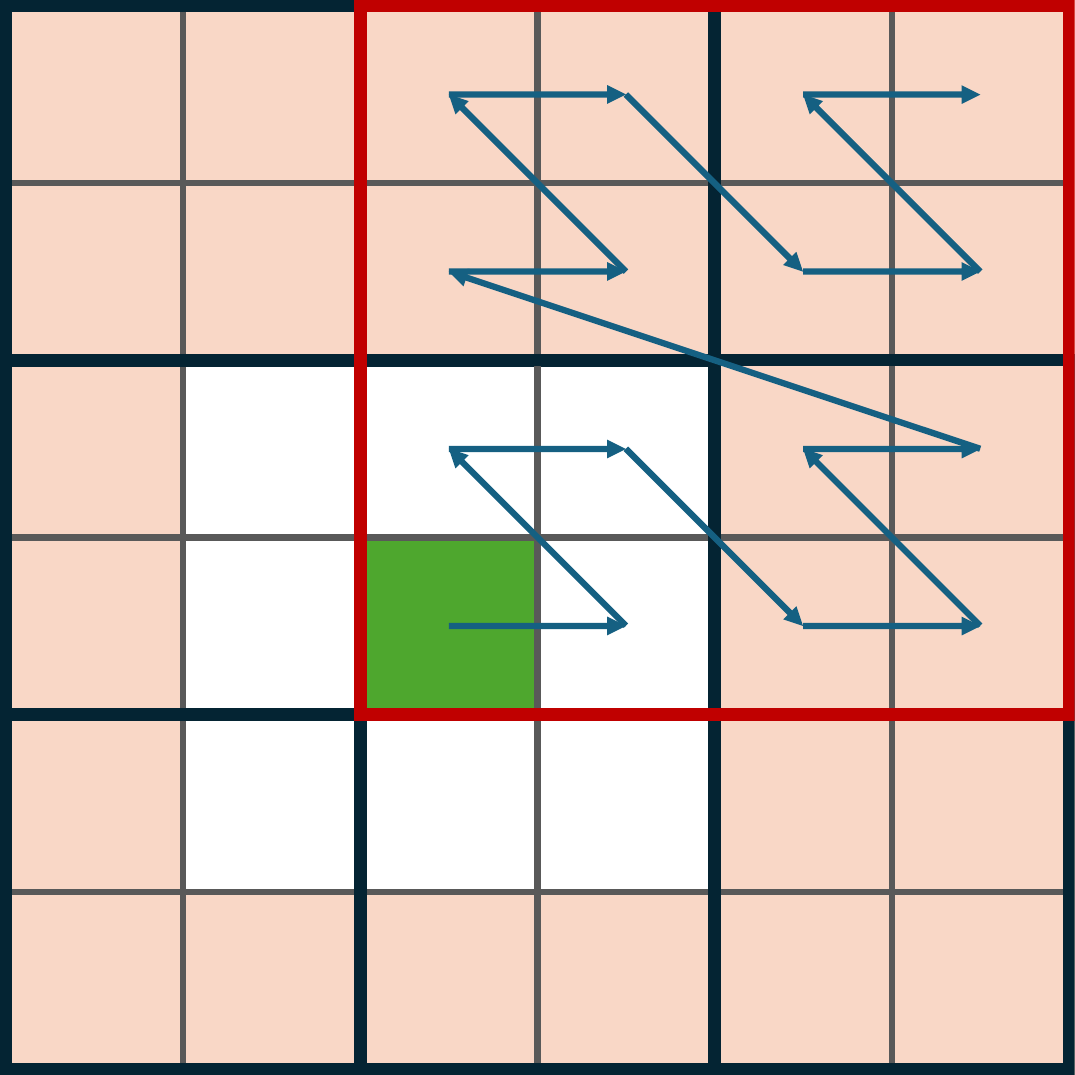}\label{fig:b}}\\
   \vspace{2mm}
    \sidesubfloat[]{\includegraphics[width = 0.3\textwidth]{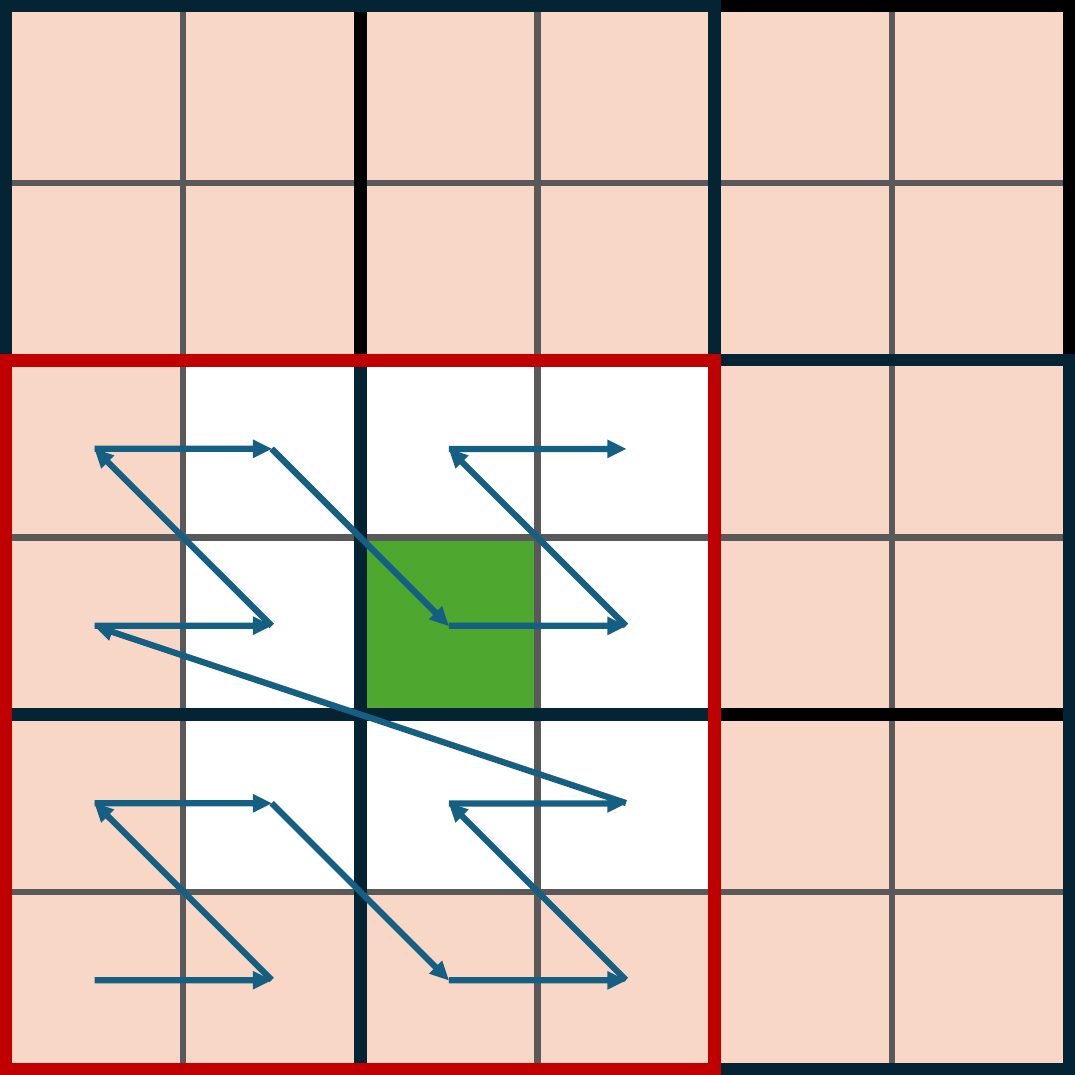}\label{fig:c}}\hspace{2.5mm}
    \sidesubfloat[]{\includegraphics[width = 0.3\textwidth]{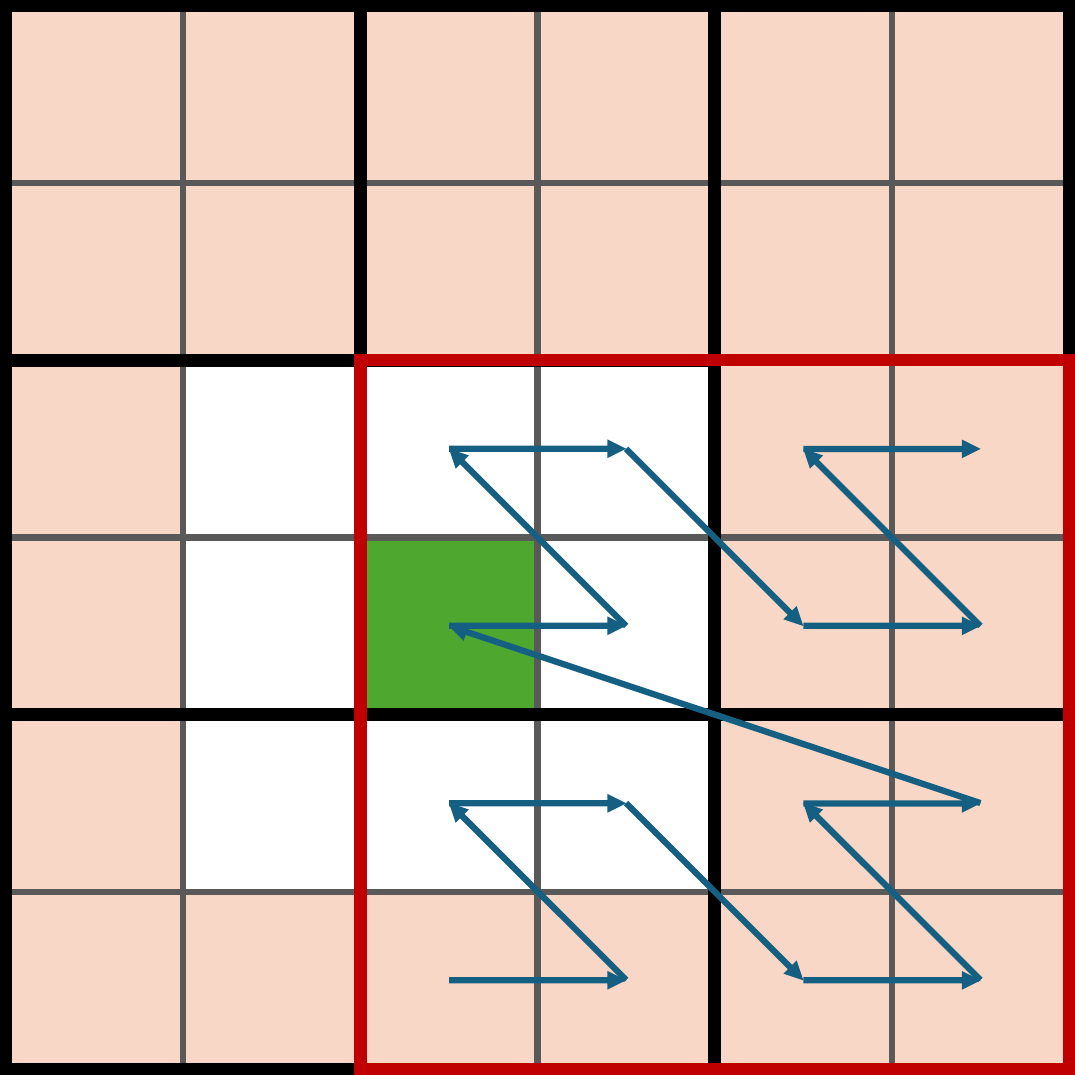}\label{fig:d}}
    \caption{The interaction list and neighbours used for FMM.
    The green square is the box $b$ of interest, the white squares are neighbours of $b$, and the boxes in the interaction list are shown in orange.
    The heavy black lines indicate the boxes for the next level up.
    The red lines indicate the box two levels up containing the box of interest, with parts (a) to (d) indicating the four alternatives.
    The blue zig-zag lines indicate the Morton ordering of boxes within that higher-level box.}
    \label{fig:Morton}
\end{figure}

Using just a single Morton ordering does not guarantee that all the boxes in the interaction list are within a fixed distance from $b$ in that ordering.
This is an issue that has been considered in the classical computing literature, and can be solved by using multiple orderings.
For example, Refs.~\cite{Liao2001,Barkalov2022} use multiple Hilbert orderings, and Ref.~\cite{LikANN} uses multiple Morton orderings.
Here, we use a slightly different approach to those, which is more easily implemented in a quantum algorithm.

In particular, the Morton ordering is more easily implemented than the Hilbert ordering, because it can be achieved by simply interleaving the bits of $x$, $y$, and $z$, then applying the quantum sort.
To achieve the multiple orderings, we apply combinations of increments in the $x$, $y$, and $z$ directions, and sort again.
For a level-$\ell-2$ box, bit number $\ell-3$ is identical for all positions within the box.
That is, bit number $\ell-3$ is identical for all $x$-coordinates, bit number $\ell-3$ is identical for all $y$-coordinates, and so on.
For example, if $\ell-2=2$ (the first subdivision of the complete region), then bit number $\ell-3=1$ is the first (most significant) bit.

Now the effective level-$\ell-2$ box can be changed by incrementing the coordinates.
If the level-$\ell-2$ box is that shown in Figure \ref{fig:Morton}(a), then adding 1 to bit number $\ell-2$ for the $x$-coordinate will shift the effective level-$\ell-2$ box over to that shown in Figure \ref{fig:Morton}(b).
For example, if $\ell-2=2$ and the first 3 bits of the $x$-coordinates are
\begin{equation}
    000,~ 001,~ 010,~ 011,~ 100,~ 101,
\end{equation}
then you can see that the first bit is the same for the first four (corresponding to the first four level-$\ell$ boxes in the diagram).
Adding 1 to bit number $\ell-2=2$ gives
\begin{equation}
    010,~ 011,~ 100,~ 101,~ 110,~ 111,
\end{equation}
so the four boxes on the right have the same first bit, and are within the same effective level-$\ell-2$ box.

In this way, by the 4 combinations of incrementing $x$ and $y$ (or $x$, $y$, and $z$ in the 3D case), the effective level-$\ell-2$ box can be shifted to each of the possible locations.
This will be true regardless of where the initial (unshifted) position of the level-$\ell-2$ box is.
For example, if the position is that illustrated in Figure \ref{fig:Morton}(b), then the increment in the $x$-coordinate will shift the effective box to that shown in Figure \ref{fig:Morton}(a).

For each shift, the bits of $x$, $y$, and $z$ are interleaved and the sort is performed to give the Morton ordering with a shift.
Performing this with each combination of shifts guarantees that each box in the interaction list is within a fixed distance of $4^d-1$ for dimension $d$ in at least one ordering.
In particular, the result can be given as follows, with the notation $[a,b]$ indicating the set corresponding to the range of integers.

\begin{lemma}[Morton orderings with shifts]
\label{lem:Mort}
    For $\vb{p}\in [0,2^n-1]^d$, let
    \begin{equation}
        D_p := \left\{ \vb q\in [0,2^n-1]^d ~\middle|~  |\lfloor p_j/2 \rfloor -\lfloor q_j/2 \rfloor |\le 1,~ \forall j\in[1,d]  \right\}\, ,
    \end{equation}
    and $M: \mathbb{N}^d \to \mathbb{N}$ be the function applying the Morton ordering to a vector of integers.
    Then for all $\vb q\in D_p$, there exists a vector of increments $\vb z \in \{0,2\}^d$ such that 
    \begin{equation}\label{eq:Mortdif}
        |M(\vb p+\vb z \bmod 2^n)-M(\vb q+\vb z \bmod 2^n)| \le 4^d-1 \, .
    \end{equation}
\end{lemma}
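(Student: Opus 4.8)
The plan is to reduce the $d$-dimensional claim to a one-dimensional statement about the binary expansions of the coordinates, then exploit the fact that the Morton code $M$ interleaves bits so that agreement of high-order bits across all $d$ coordinates forces proximity in $M$. Write $\vb z\in\{0,2\}^d$ componentwise, so that incrementing $p_j$ by $z_j=2$ flips bit number $n-2$ (the bit of weight $2$) and possibly carries into higher bits, while leaving the lowest bit (weight $1$) untouched. The key one-dimensional observation I would establish first: for $p_j,q_j\in[0,2^n-1]$ with $|\lfloor p_j/2\rfloor-\lfloor q_j/2\rfloor|\le 1$, there is a choice $z_j\in\{0,2\}$ such that $p_j+z_j\bmod 2^n$ and $q_j+z_j\bmod 2^n$ agree in all bits of weight $\ge 4$ (i.e. their values differ only in the two least significant bits). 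Indeed, $\lfloor p_j/2\rfloor$ and $\lfloor q_j/2\rfloor$ are either equal or adjacent integers; if they are equal we take $z_j=0$; if they straddle a "carry boundary" at some higher bit we pick $z_j=2$ to push both onto the same side of that boundary. Concretely, $\lfloor(p_j+z_j)/4\rfloor=\lfloor(q_j+z_j)/4\rfloor$ for a suitable $z_j\in\{0,2\}$, because two adjacent values of $\lfloor\cdot/2\rfloor$ can disagree in the bit of weight $2$ only when they are $\{2m,2m+1\}$ versus $\{2m+1,2m+2\}$-type configurations, and adding $2$ when necessary aligns them. This is the step I expect to require the most care: tracking exactly how a carry propagates and verifying that $z_j\in\{0,2\}$ (not a larger shift) always suffices to make $\lfloor(p_j+z_j)/4\rfloor=\lfloor(q_j+z_j)/4\rfloor$, including the wraparound case modulo $2^n$.

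Granting the one-dimensional claim, I would choose $\vb z=(z_1,\dots,z_d)$ coordinatewise by the above rule. Set $\vb p'=\vb p+\vb z\bmod 2^n$ and $\vb q'=\vb q+\vb z\bmod 2^n$. By construction, for every $j$ the coordinates $p'_j$ and $q'_j$ agree in all bits except possibly the two least significant. Now I invoke the structure of the Morton code: $M$ places bit $k$ of coordinate $j$ (ordered from most significant) into position $d(k-1)+j$ of the output. Hence $M(\vb p')$ and $M(\vb q')$ can differ only in the $2d$ output-bit positions corresponding to the two least significant bits of each of the $d$ coordinates, and these $2d$ positions are exactly the $2d$ least significant bits of the Morton code. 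Therefore $\lfloor M(\vb p')/4^d\rfloor=\lfloor M(\vb q')/4^d\rfloor$, which gives $|M(\vb p')-M(\vb q')|\le 4^d-1$, i.e. \eq{Mortdif}.

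A couple of details I would nail down along the way. First, the definition of $D_p$ uses $|\lfloor p_j/2\rfloor-\lfloor q_j/2\rfloor|\le 1$, which is precisely the "same box or neighbouring box at one level up" condition, so the hypothesis is exactly what the one-dimensional lemma needs. Second, I must confirm that the modular reduction does not break the argument: since we only ever add $0$ or $2$, the reduction mod $2^n$ affects only whether a carry wraps around from the top bit, and in that event both $p'_j$ and $q'_j$ wrap together (they agreed in the top bits before adding, hence also after), so agreement of high-order bits is preserved cyclically; the bound $4^d-1$ then follows from comparing the two Morton codes bitwise outside the bottom $2d$ positions. I would present the one-dimensional lemma as a short sub-claim with its own two-line proof (case analysis on whether $\lfloor p_j/2\rfloor=\lfloor q_j/2\rfloor$ and, if not, on the location of the lowest bit where $\lfloor p_j/2\rfloor$ and $\lfloor q_j/2\rfloor$ diverge), and then the $d$-dimensional result is immediate from the interleaving property of $M$.
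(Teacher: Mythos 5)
Your proposal is correct and takes essentially the same route as the paper: choose $z_j=2$ precisely when $\lfloor p_j/4\rfloor\neq\lfloor q_j/4\rfloor$, verify $\lfloor(p_j+z_j)/4\rfloor=\lfloor(q_j+z_j)/4\rfloor$ by the case analysis you outline, and use the bit-interleaving structure of $M$ to conclude that the shifted Morton codes agree above the bottom $2d$ bits, giving the $4^d-1$ bound. The only detail to tighten is the wraparound remark: with this choice of $z_j$ the larger of the two coordinates is congruent to $0$ or $1$ modulo $4$, so $p_j+z_j,q_j+z_j\le 2^n-1$ and wraparound never occurs (your stated justification, that the coordinates agree in the top bits before adding, need not hold), which is exactly how the paper disposes of the modular reduction.
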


\begin{proof}
    First, note that points satisfying $\lfloor \vb p/2^m\rfloor=\lfloor \vb q/2^m\rfloor$ will satisfy $\lfloor M(\vb p)/2^{dm}\rfloor=\lfloor M(\vb q)/2^{dm}\rfloor$.
    This is because the set of points satisfying $\lfloor \vb p/2^m\rfloor=\lfloor \vb q/2^m\rfloor$ are those within a cube of side length $2^m$ corresponding to the first $n-m$ bits of each coordinate of $\vb p$ being equal to those of $\vb q$.
    The Morton ordering groups all $2^{dm}$ points with these same initial bits together, because it corresponds to sorting with the bits interleaved.
    Because there are $2^{dm}$ points, they can be separated by no more than $2^{dm}-1$ in the Morton ordering.

    Now, given a point $\vb q\in D_p$, take
    \begin{equation}
        \vb z = 2|\lfloor \vb p/4 \rfloor -\lfloor \vb q/4 \rfloor | \, .
    \end{equation}
    Since $|\lfloor p_j/2 \rfloor -\lfloor q_j/2 \rfloor |\le 1$, 
    it must also be true that $|\lfloor p_j/4 \rfloor -\lfloor q_j/4 \rfloor |\le 1$, and so
    this vector is within the required set $\{0,2\}^d$.
    For any given $j$, if $z_j=0$, then $\lfloor p_j/4 \rfloor =\lfloor q_j/4 \rfloor$.
    Then it is clear that
    \begin{equation}\label{eq:peqq}
        \lfloor (p_j+z_j \bmod 2^n)/4\rfloor = \lfloor p_j/4\rfloor = \lfloor q_j/4\rfloor = \lfloor (q_j+z_j \bmod 2^n)/4\rfloor \, .
    \end{equation}
    On the other hand, if $z_j=2$, then put $p_j=4a+r$ and $q_j=4b+s$ with $r,s\in \{0,1,2,3\}$.
    For $p_j<q_j$, then $b=a+1$ and $r\in\{2,3\}$, $s\in \{0,1\}$.
    This restriction on $r,s$ is required so that $|\lfloor p_j/2 \rfloor -\lfloor q_j/2 \rfloor |\le 1$ for the requirement that $\vb q\in D_p$.
    Then
    \begin{align}
        p_j+z_j &= 4a+r + 2 = 4(a+1) + r-2 = 4b + r-2 \, , \\
        q_j+z_j &= 4b+s + 2 \, .
    \end{align}
    The conditions $r\in\{2,3\}$, $s\in \{0,1\}$ imply that $r-2\in\{0,1\}$, $s+2\in \{2,3\}$.
    Note that neither $p_j+z_j$ or $q_j+z_j$ can increase above $2^n-1$, and so the modular addition does not affect the outcome, and
    \begin{align}
        \lfloor (p_j+z_j \bmod 2^n)/4\rfloor &= \lfloor (p_j+z_j)/4\rfloor \nn
        &= \lfloor (4b + r-2)/4\rfloor \nn
        &= b \nn
        &= \lfloor (4b + s+2)/4\rfloor \nn
        &= \lfloor (q_j+z_j)/4\rfloor \nn
        &= \lfloor (q_j+z_j \bmod 2^n)/4\rfloor \, .
    \end{align}
    The reasoning for $z_j=2$ and $p_j>q_j$ is identical with the roles of $p_j$ and $q_j$ reversed.
    Thus in all cases
    \begin{equation}
        \lfloor (p_j+z_j \bmod 2^n)/4\rfloor = \lfloor (q_j+z_j \bmod 2^n)/4\rfloor \, .
    \end{equation}

    In turn, using the result given at the start of the proof, this implies that $\lfloor M(\vb p+\vb z)/4^{d}\rfloor=\lfloor M(\vb q+\vb z)/4^{d}\rfloor$, and the points $M(\vb p+\vb z \bmod 2^n)$ and $M(\vb q+\vb z \bmod 2^n)$ cannot be separated by any more than $4^d-1$, which is the result to be proven for the Lemma.
\end{proof}

The interpretation of this Lemma is that $\vb p$ is the vector for a box number at level $\ell$ for dimension $d$, and $D_p$ is the set of boxes that are neighbours of $\vb p$ or in the interaction list (as well as $\vb p$ itself).
This means that it is within a level-$\ell-1$ box (the reason for the division by 2 and floor) displaced by no more than one position in any direction.
Less significant bits for positions within that box number are not included in this Lemma.
Then $\vb z$ is the vector of displacements by 2, corresponding to shifting by the size of a level-$\ell-1$ box.
The additions are modulo the size of the entire region of $2^n$, then Eq.~\eqref{eq:Mortdif} bounds the difference between $\vb p$ and $\vb q$ in the shifted Morton ordering.

This choice of shifted Morton orderings then enables us to copy to each box the data from all boxes in its interaction list for the calculation of the potential.
The complete procedure is given in Algorithm \ref{alg:potl}, which is the quantum implementation of loop 3 of the FMM algorithm in Algorithm \ref{alg:FMMagg}.
At each level, we apply the increment and the quantum sort according to the Morton ordering.
With that sort, we go through the list of particles, copying the information from neighbouring boxes at level $\ell$.
As discussed above, we need only go through the list in one direction.

\begin{algorithm}
\caption{Calculation of potential in dimension $d$}\label{alg:potl}
$K\gets 4^d-1$\\
$V\gets 0$\\
\For{$\ell \gets 3$ \KwTo $\Levs$}{
    \For{$\vb z\in \{0,2\}^d$}{
        Add $\vb z$ to box number at level $\ell$ and sort according to Morton ordering.\\
        Apply Algorithm \ref{alg:copy} to copy charge information. \\
        \For{$j\gets 0$ \KwTo $\eta-1$}{
            $V_b \leftarrow V_{\parent(b)}$ \\
            \For{$k\gets 1$ \KwTo $K$}{
                \If{$Q_{k}(j,\ell)$ contains charge information for a box in the interaction list not already added}{
                    $V_b \leftarrow V_b + \Phi(\vb{c}_a,\vb{c}_b)Q_{k}(j,\ell)$
                }
            }
        }
        Invert Algorithm \ref{alg:copy} to erase $Q_{k}(j,\ell)$ registers.
        Invert Morton ordering sort and subtract $\vb z$.
    }
}
\end{algorithm}

At each step, we check whether the potential was added at a previous step, because the set of Morton orderings results in the charge information for each box in the interaction list being included multiple times.
In Algorithm \ref{alg:potl}, the quantity $\vb{c}_b$ is the position of the centre of the level-$\ell$ box that particle $j$ is in, and $\vb{c}_a$ is the centre of the box that the information in $Q_{k}(j,\ell)$ corresponds to.
The box potential $V_b$ is stored in a register associated with each particle as well.
Because registers associated with each particle contain the information for boxes it is contained in at each level, the information from the parent box may be easily copied.
That is, we do not need to locate information in registers that are associated with other particles for $V_b \leftarrow V_{\parent(b)}$.

The remaining part needed is the direct calculation of the sum over pairwise interactions with particles in neighbouring boxes for $\ell=\Levs$.
Note that we do not sum over multiple particles in the same box, because we take level $\Levs$ to correspond to the maximum resolution of the particle positions, so only one electron may occupy the box.
The additional sum of the contributions to the potential from particles in the neighbouring boxes at level $\Levs$ is easily accounted for by including neighbouring boxes in the interaction list in Algorithm \ref{alg:potl}.
The final sum over $q_iV_i$ to obtain the potential energy $V$ is easily obtained by summing over the values of $V_b$ that have been calculated at level $\ell=\Levs$.
In this way, this algorithm effectively gives the quantum implementation of the downward pass of the FMM in Algorithm \ref{alg:FMMagg}, with the quantum implementation of the upward pass of Algorithm \ref{alg:FMM} being given in Algorithm \ref{alg:calccharg}.

Accounting for the electron-nuclear potential is more complicated for the adaptive case than for the non-adaptive case.
It is no longer possible to choose the nuclei to sum over classically, because the box number is governed by the value in a quantum register, and so the correct nuclei to consider are chosen by that register.
A simple way to account for the nuclei is to include registers for the nuclear positions in the list with the electron positions, and similarly associate registers for the multipole information at each level for each nucleus.
Because the number of nuclei is no more than the number of electrons in cases of interest, the inclusion of nuclei does not change the asymptotic complexity.

\section{Quantifying the complexity}
\label{sec:complex}
Now we give the overall complexity of the simulation with the quantum FMM.
We will analyse only the adaptive protocol, as it does not require any assumption on the distribution of particles.
We first consider the complexity when only using the charge to calculate the potential, then consider the adjustment to the complexity for the multipole information.
We also just analyse the complexity for calculating the electron-electron potential, because the nuclei do not change the complexity.
In Algorithm \ref{alg:calccharg} for calculating the charge information, there are $\Levs-1$ steps, and for each there are $\eta-1$ steps of copying charge information, as well as $\eta$ steps of adding charges together.
The complexity of the copying or addition is proportional to the number of bits for the charge which is at most $\lceil \log\eta\rceil$.
This gives a complexity of 
\begin{equation}\label{eq:step1comp}
    \cO(\eta \log \eta \log N) \, ,
\end{equation}
where the factor of $\eta$ is from the iteration over particles, the factor of $\log\eta$ from the arithmetic, and $\log N$ from the number of levels.
In addition, the sorting to Morton ordering gives the same contribution to the complexity.
This is because sorting $\eta$ items with optimal sorting networks requires $\cO(\eta \log \eta)$ steps.
Each step has complexity $\cO(\log N)$ because that is the size of the items to be sorted (the particle positions), giving the same complexity as in Eq.~\eqref{eq:step1comp}.

Next we consider the complexity of Algorithm \ref{alg:potl} for computing the potential.
The contributions to the complexity are as follows.
\begin{enumerate}
    \item There are $\cO(\log N)$ sorts to be performed, with each being a sort of $\eta$ items and therefore having complexity $\cO(\eta \log \eta)$ with optimal sorting networks.
    The items to be sorted include only the particle positions and the corresponding box charge for level $\ell$, and so is of size $\cO(\log N)$.
    Note that we only need the charge information for level $\ell$, so do not need to sort the charge information for the other levels at this step.
    As a result the complexity of the sorts is
\begin{equation}\label{eq:potlsorts}
    \cO(\eta \log \eta \log^2 N) \, .
\end{equation}
\item We need to use Algorithm \ref{alg:copy} to copy the charge information.
This is done $\cO(\log N)$ times, and each time there are $\eta$ steps with $\cO(\log\eta)$ information to be copied.
As a result the complexity is the same as in Eq.~\eqref{eq:step1comp}.
\item There are $\cO(\eta\log N)$ steps where we need to calculate the addition to the potential.
There are $\cO(1)$ boxes in the interaction list, and so $\Phi(\vb{c}_a,\vb{c}_b)$ can be determined with a QROM with complexity $\cO(1)$ Toffolis, or $\cO(\log(1/\epsilon))$ total gates accounting for the precision that the potential is given to.
The multiplications by $Q_{k}(j,\ell)$ have complexity $\cO(\log \eta \log(1/\epsilon))$, and the addition has negligible complexity in comparison.
There is also a check whether the information is from a box in the interaction list, which is based on the coordinate and can be performed with $\cO(\log N)$ complexity.
As a result, the total complexity of this part is
\begin{equation}\label{eq:mults}
    \cO(\eta \log N [\log \eta \log(1/\epsilon)+\log N]) \, .
\end{equation}
In practice, the cost of multiplication is significantly larger than the cost of checking whether the box is in the interaction list, so the $\log N$ in the square brackets may be omitted.

\item The final sum over $q_iV_i$ is just a sum over the local potentials and it has complexity
\begin{equation}
    \cO(\eta\log(1/\epsilon)) \, .
\end{equation}
\end{enumerate}
The main contributions to the complexity are from Eq.~\eqref{eq:mults} and Eq.~\eqref{eq:potlsorts}.
In this costing, the number of bits for the arithmetic is taken to be $\cO(\log(1/\epsilon))$, so there is no need to take the number of bits for each component of the coordinates to be more than $\cO(\log(1/\epsilon))$.
This corresponds to a choice of $N$ such that $\log N=\cO(\log(1/\epsilon))$\footnote{Note that it is usually found that the error due to the finite size of basis sets scales as $\cO(1/N)$
\cite{gruneis2013explicitly,hattig2011explicit,Harl2008,shepherd2012convergence,helgaker1997basis_b,klopper1995ab2,Halkier1998b}, so it is sufficient to take $N=\cO(1/\epsilon)$.
That also implies $\log N=\cO(\log(1/\epsilon))$.}.
In that case the complexity of the whole procedure can be given as $\cO(\eta \log \eta \log N \log(1/\epsilon))$.

The leading contributions to the qubits used are as follows.
\begin{enumerate}
    \item The number of qubits to store the system state is
\begin{equation}
    \cO(\eta \log N) \, .
\end{equation}
\item The qubits used for the charge information at all levels is
\begin{equation}
    \cO(\eta \log \eta \log N) \, .
\end{equation}
This is because there are $\cO(\log N)$ levels, and each uses $\cO(\log \eta)$ qubits for the charge.
\item The potentials $V_b$ are stored for each level with $\cO(\log(1/\epsilon))$ qubits each, for a total
\begin{equation}
    \cO(\eta \log N \log(1/\epsilon)) \, .
\end{equation}
\item There are temporary qubits used for the coherent sorts.
Because these record the results of inequality tests, and there are $\cO(\eta \log \eta)$ of these tests, the total number of qubits is
\begin{equation}
    \cO(\eta \log \eta) \, .
\end{equation}
\end{enumerate}
The leading contribution to the logical qubits used is $\cO(\eta \log N \log(1/\epsilon))$ to store the potentials $V_b$.

More generally, for the multipole expansion, the order needed for precision $\epsilon$ is $\trunc=\cO(\log(1/\epsilon))$.
The number of multipole moments scales as the square of the order, and each needs $\cO(\log(1/\epsilon))$ bits.
As a result, rather than just the charge being given with $\lceil\log \eta\rceil$ bits, the multipole information needs a number of bits $\cO(\trunc^2\log(1/\epsilon))=\cO(\log^3(1/\epsilon))$.
Similarly, $V_b$ is replaced with a local expansion needing a number of bits $\cO(\log^3(1/\epsilon))$ (see Appendix~\ref{app:additional_fmm}).
The total
number of qubits used can therefore be given as
\begin{equation}
    \cO(\eta \log N  \log^3(1/\epsilon)) \, .
\end{equation}

Similarly, the operations on the multipole information involve $\cO(\trunc^2)$ arithmetic operations, which are primarily multiplications, so have complexity $\cO(\log^2(1/\epsilon))$.
These operations are performed for each electron at $\log N$ levels, giving the complexity of the quantum FMM as
\begin{equation}
    \cO(\eta \log N \log^4(1/\epsilon)) \, .
\end{equation}
This complexity is true both for the calculation of the multipole information by combining that for individual boxes, and the calculation of the potential (see Appendix \ref{app:M2P}).

For the overall complexity, we need to account for the number of Trotter steps as well.
This is accounted for in Ref.~\cite{LowPRX2022}, which we summarise here.
A real-space grid Hamiltonian may be given as
\begin{align}\label{eq:ueg}
H = \sum_{j, k,\sigma}\tau_{j,k}a_{j,\sigma}^{\dagger}a_{k,\sigma} + \sum_{l, m,\sigma,\tau}\nu_{l,m}a_{l,\sigma}^{\dagger} a_{l,\sigma}a_{m, \tau}^{\dagger} a_{m, \tau} \, ,
\end{align}
where $a_{j,\sigma}^\dagger$ and $a_{j,\sigma}$ are creation and annihilation operators, $\{j, k, l, m\}$ are orbital indices, and $\{\sigma, \tau\}$ are spin indices.
The spectral norm error in a fixed number particle manifold for an order-$k$ product formula $S_k(t)$ can be estimated as \cite{LowPRX2022}
\begin{equation}\label{eq:Low_bound}
    \norm{S_k(t)-e^{-itH}}_{W_\eta}=\cO\left((\norm{\nu}_{1,[\eta]}+\norm{\tau}_{1})^{k-1}\norm{\tau}_1\norm{\nu}_{1,[\eta]}\eta\, t^{k+1}\right) \, ,
\end{equation}
where the norms are defined as 
\begin{align}
\|\nu\|_{1,\left[\eta\right]} &= \max_{j} \max_{k_{1} < ... < k_{\eta}} \left(|v_{j,k_{1}}| + ... + |v_{j, k_{\eta}}| \right) , \\
\|\tau\|_{1} &= \max_{j} \sum_{k}|\tau_{j,k}| \, .
\end{align}
The number of time steps needed for a product formula of order $k$ is then
\begin{equation}
    \cO\left( t^{1+1/k} (\norm{\nu}_{1,[\eta]}+\norm{\tau}_{1})^{1-1/k}(\norm{\tau}_1\norm{\nu}_{1,[\eta]}\eta/\epsilon)^{1/k}\right) \, .
\end{equation}
The scaling of the norms is
\begin{align}
    \norm{\nu}_{1,[\eta]} &= \cO \left( \frac{\eta^{2/3} N^{1/3}}{\Omega^{1/3}} \right)\, , \\
    \norm{\tau}_{1} &= \cO \left( \frac{N^{2/3}}{\Omega^{2/3}} \right) \, .
\end{align}
This then gives the number of steps as
\begin{equation}\label{eq:complex}
    \cO\left( t^{1+1/k} \left(\frac{\eta^{2/3} N^{1/3}}{\Omega^{1/3}}+\frac{N^{2/3}}{\Omega^{2/3}}\right)^{1-1/k}\left(\frac{\eta^{5/3} N}{\Omega\epsilon}\right)^{1/k}\right) \, .
\end{equation}
Since $k$ may be chosen arbitrarily large (though in practice a moderate value will be optimal), we can give the complexity with $1/k$ replaced by $o(1)$, which gives
\begin{equation}
    t \left(\frac{\eta^{2/3} N^{1/3}}{\Omega^{1/3}}+\frac{N^{2/3}}{\Omega^{2/3}}\right)\left(\frac{\eta N t}{\Omega\epsilon}\right)^{o(1)} \, .
\end{equation}
The $\cO$ is not given for the entire expression, because the fact that this is an asymptotic scaling is accounted for in the $o(1)$.
The factors of $\log N$ and $\mathrm{polylog}(1/\epsilon)$ for the complexity of the FMM calculation can be included with the $N^{o(1)}$ and $1/\epsilon^{o(1)}$.
Therefore, multiplying by the Toffoli complexity of the FMM calculation gives the Toffoli complexity of the complete simulation as
\begin{equation}
    t \left(\frac{\eta^{5/3} N^{1/3}}{\Omega^{1/3}}+\frac{\eta N^{2/3}}{\Omega^{2/3}}\right)\left(\frac{\eta N t}{\Omega\epsilon}\right)^{o(1)} \, ,
\end{equation}
as given above in Eq.~\eqref{eq:complexity}.
A common assumption to simplify the complexity is the thermodynamic limit, where $\eta\propto\Omega$.
This gives the complexity as
\begin{equation}
    t \left(\eta^{4/3} N^{1/3}+\eta^{1/3} N^{2/3}\right)\left(\frac{\eta N t}{\epsilon}\right)^{o(1)} \, ,
\end{equation}
which is the complexity stated in the abstract.
In this expression $\eta^{o(1)}$ is included because there is a factor of $(\eta^{5/3}/\Omega)^{1/k}$ in Eq.~\eqref{eq:complex}, which gives $\eta^{2/3k}$ for $\eta\propto\Omega$.

\section{Conclusion}
\label{sec:conc}
The need to perform a double sum with complexity $\cO(\eta^2)$ to compute the potential energy has been a long-standing bottleneck in simulation of quantum chemistry based on product formulae.
The fast multipole method provides a speedup to $\widetilde\cO(\eta)$ in classical computing, but the difficulty in translating the method to quantum algorithms is that it requires data accesses at positions according to the values in data registers, particularly for the adaptive case.
That would negate the speedup due to the overhead in quantum data access.

Here, we have provided a method to overcome this problem, yielding a quantum algorithm for FMM with complexity close to linear in the number of particles $\eta$.
Our method uses a similar principle as the quantum sort, where the data accesses are made in a fixed sequence of locations.
The procedure is considerably more difficult than the quantum sort, because it is also necessary to access data from boxes in the interaction list.

To overcome that problem, we combined the quantum sort with a set of shifted Morton orderings, in order to provide a moderate set of sequences which enable access to all boxes in the interaction list.
To access this information we associated box information with each particle, and used a procedure for copying the data from neighbouring boxes.

The result of this advance is that the overall complexity scaling is better than in prior work.
For grid-based approaches, large numbers of grid points (or plane waves) are required in order to provide accurate simulation, such that $N \gg \eta$.
In this limit, the best complexity is achieved by using interaction picture simulation, which provides a $N^{1/3}$ scaling \cite{BabbushContinuum}.
That method however has an additional factor of $\eta^3$, and so the method presented here has the best complexity for $N<\eta^7$, which is the typical range considered for realistic examples.
For more comparable methods, such as the first-quantised product formula approach in Ref.~\cite{RubinPNAS24}, we improve over the complexity by a factor of $\eta$ (up to logarithmic factors).

For these results on the FMM it is important that we are considering the first-quantised representation.
For second quantisation the double sum may be avoided in a much simpler way using the approach of Ref.~\cite{low2019hamiltoniansimulationinteractionpicture}, so there is no advantage to using the FMM.
We provide better performance than the second quantisation approach, because second quantisation yields a complexity at least linear in $N$ for the sum, which is much larger for the typical scenario where $N \gg \eta$.

In this work we have focused on showing that it is possible to achieve the asymptotic scaling associated with the FMM.
In practice, there are large constant factors that mean thousands of electrons are needed for FMM to provide a speedup.
There are further overheads involved in the quantum implementation, which could increase the regime for a speedup by an order of magnitude.
Moreover, the implementation uses a large amount of data, around 160 million logical qubits for the example of $\eta=4000$ electrons.
That is for the memory-efficient form in Appendix \ref{app:M2P}, where most of the storage is for multipole information.
In comparison, second-quantised representations would require larger numbers of qubits for large $N$, since a qubit is needed for every orbital.
Moreover, the method in Ref.~\cite{low2019hamiltoniansimulationinteractionpicture} has a further space overhead of a factor of $\log(1/\epsilon)$ for the method of calculating the potential.

Therefore, any physical implementation will require access to a quantum computer with a number of physical qubits exceeding $\sim 10^8$. The scale of this overhead depends on assumptions on the hardware platform as well as the error correction scheme used. For example, 
an implementation in the surface code would require thousands of physical qubits per logical qubit, but this can be improved upon by using high-rate LDPC codes, such as lifted-product (LP) codes, which are far more compact
\cite{Panteleev2021degeneratequantum}.
For example, LP memory codes can have rates as high as $0.285$ \cite{cain2026shorsalgorithmpossible10000}, implying $\sim 560$ million physical qubits for data storage, with additional qubits needed for fault-tolerant execution of logical gates.
That is large, though comparable to early estimates for simulation of FeMoco \cite{Elucidating}. 
Similarly, further improvements of the implementation here could significantly reduce the qubit requirements, positioning the FMM as a viable algorithm for later generations of fault-tolerant devices.
The coherent sorts used for data retrieval move data between a wide range of locations.
That means a physical implementation would need more than nearest-neighbour connectivity to avoid data-movement overheads, though that would already be needed for LP codes.

However, the focus in this work is on showing that it is possible to achieve the $\cO(\eta)$ complexity at all, and reducing the overheads is a topic for future work.
In particular, we have used an early form of the FMM for clarity of exposition \cite{greengard1988efficient,greengard1997new}, and there have been many advances since then.
For example, the Tucker decomposition \cite{Tucker} and reduced-rank approximations \cite{Hesford2011} can reduce the data needed for the FMM.
Recent work has shown a ``skeleton'' representation of the far field can simplify the implementation \cite{Skeleton}.
FMM code developed for GPUs, such as PKDGRAV3, is designed to minimise data movement as well as reduce memory usage \cite{potter2017pkdgrav3}.

There are also potential improvements that can be made to the quantum implementation to improve the efficiency.
Note that redundant information for all boxes is included by associating box information for each electron register.
It is possible that further improvements could be made by using only a single copy of information for each box (a principle used in Ref.~\cite{Stetina2025}).
There is also redundancy in our usage of multiple Morton orderings, which results in the information from boxes in the interaction list being accessed multiple times.
Potentially, alternatives such as Hilbert orderings \cite{Liao2001,Barkalov2022} or different shifts in the Morton ordering \cite{LikANN} could provide better performance.

The overheads could also be significantly reduced using the simplified quantum FMM for evenly distributed particles from Section \ref{sec:even}.
The difficulty in using that scheme is that the quantum simulation includes basis states with unevenly distributed particles, which is why we have presented the full adaptive scheme as our main approach.
The full adaptive scheme has larger overhead due to using more levels, associating multipole information with every electron, and the difficulty of accessing the information from the interaction list.
If it were possible to bound the error while restricting to a subspace of evenly distributed particles, then the simpler scheme of Section \ref{sec:even} could be used.

In the future it may also be interesting to use our scheme as a basis for efficient implementations of the split-operator method for Hamiltonians with more general potentials.
An important special case is the simulation of periodic quantum systems in a dual plane-wave basis \cite{DualPlane}.
That results in the kernel
\begin{equation}
    V(\vb{r}_{i},\vb{r}_{j}) = \frac{4\pi}{\Omega} \sum_\nu \frac{\cos(k_\nu \cdot (\vb{r}_{i}-\vb{r}_{j}))}{k_\nu^2},
\end{equation}
which is quite different from the standard Coulomb interaction in a real-space basis that we have analysed here.
However, adapting FMM to this would involve straightforward modification of the interaction lists and certain translation operators, as in classical implementations~\cite{schmidt1991implementing,schmidt1997multipole,baczewski2011n}.

The fast multipole method is also related to the method of hierarchical matrices \cite{FenSte02}, which allows matrix-vector multiplication of general kernel matrices $\Phi(\Vert \vb{r}_{i}-\vb{r}_{j}\Vert)$ of dimension $n$ in time $\cO(n\,\mathrm{polylog}(n))$.
While related, the fast multipole does not straightforwardly map onto these more general methods.
In particular, hierarchical matrix methods replace the interaction list with \emph{admissible blocks}, which are defined by well-separated pairs of clusters of particles.
However, these clusters are not confined to one level of the hierarchical splitting of the kernel and so require additional arithmetic to avoid overlapping blocks.
The interaction list of a box in our fast multipole scheme  can therefore be understood as a subset of the set of admissible blocks.
Future work is required to include such methods in the context of split-operator-based simulation.

\section*{Acknowledgments}
The authors thank Guang Hao Low, Nathan Wiebe, Rolando Somma, Robin Kothari, and Pedro Costa for helpful discussion. DWB worked on this project under a sponsored research agreement with Google Quantum AI.
DWB is also supported by Australian Research Council Discovery Project DP220101602.
ADB was supported by the DOE Office of Fusion Energy Sciences ``Foundations for quantum simulation of warm dense matter'' project. ECE and AT are supported by Google Quantum AI.
This article has been co-authored by employees of National Technology \& Engineering Solutions of Sandia, LLC under Contract No.\ DE-NA0003525 with the U.S.\ Department of Energy (DOE). 
The authors own all right, title and interest in and to the article and are solely responsible for its contents. 
The United States Government retains and the publisher, by accepting the article for publication, acknowledges that the United States Government retains a non-exclusive, paid-up, irrevocable, world-wide license to publish or reproduce the published form of this article or allow others to do so, for United States Government purposes. 
The DOE will provide public access to these results of federally sponsored research in accordance with the DOE Public Access Plan \url{https://www.energy.gov/downloads/doe-public-access-plan}.

\bibliography{Mendeley,extra}

\appendix
\section{Additional FMM implementation details}
\label{app:additional_fmm}

The main text primarily considers simplified versions of the FMM, Algorithm~\ref{alg:FMM} and Algorithm~\ref{alg:FMMagg} that use zeroth-order expansions.
However, higher-order expansions are essential to achieve arbitrary accuracy in approximating $V$.
In this Appendix, we consider some of the details needed to extend the results in the main text to higher order.

All versions of FMM make use of the fact that the far-field (source-free) contribution to the potential satisfies the Laplace equation.
A generic solution to the Laplace equation can be written as 
\begin{equation}
    \sum \limits_{l=0}^{\infty} \sum \limits_{m=-l}^{l}\left(L_{lm} \|\vb{r}\|^l + M_{lm}\|\vb{r}\|^{-l-1}\right)Y_{lm}(\theta,\phi), \label{eq:generic_expansion}
\end{equation}
where $(r,\theta,\phi)$ are standard spherical coordinates and $Y_{lm}$ are the spherical harmonics. 
Multipole expansions are ones for which the only non-zero coefficients are in $M_{lm}$, and local expansions are ones for which the only non-zero coefficients are in $L_{lm}$.
In practice, either type of expansion is truncated beyond some fixed order $l=\trunc$ and thus comprised of $(\trunc+1)^2$ terms.

Everything in the main text is a special case in which $\mathcal{P}=0$.
In this special case, the multipole expansion coefficient $M_{00}$ for each box is the total charge in that box, and the local expansion coefficient $L_{00}$ for each box is the average far-field contribution to the potential.
That is, $M_{00}(b)$ is equivalent to $Q_b$ in the main text, and $L_{00}(b)$ is equivalent to $V_b$.
Higher-order implementations require machinery for storing and manipulating higher moments of the charge distribution and potential.

Algorithm~\ref{alg:FMM_higher_order} provides the details of an order-$\trunc$ implementation of FMM~\cite{greengard1988efficient}.
As compared to Algorithms~\ref{alg:FMM} and~\ref{alg:FMMagg}, the new elements of this implementation include
\begin{enumerate}
    \item $(\trunc+1)^2$ multipole and local expansion coefficients for each box outside of the leaf level,
    \item operators ($\translate_{MM}$ and $\translate_{LL}$)  that translate the origins of the multipole and local expansions (respectively) from the centroid of one box to the centroid of another, and
    \item operators ($\translate_{ML}$) that translate the multipole expansion coefficients about one box into local expansion coefficients about a box in its interaction list.
\end{enumerate}
The precise details of the various translation operators are one of the primary considerations in an FMM implementation, and they determine the cost of tree traversal. 
We indicate the action of one of the translation operators with a $\circ$, e.g., $\left(\translate_{ML}(b,b')\circ M(b')\right)_{lm}$ translates a multipole expansion about box $b'$ to a local expansion about box $b$.
To remain agnostic of the implementation details particular to different approaches to applying the translation operators, the $\circ$ operation implies some unspecified contraction of the translation operator against its operand to yield an output of the same rank.
Generically, the translation operators only depend on the structure of the tree and the form of the potential ($\Phi$), so they do not require any quantum data access. 
They can be classically precomputed with $\cO(\log N\,\mathrm{polylog}(1/\epsilon))$ overhead in time and memory, which is also typical in purely classical FMM implementations.

In the adaptive version of the quantum FMM in the main text, there is a single grid point per leaf-level box.
Thus, there is no need for additional leaf-level storage for order-$\trunc$ quantum FMM because the monopole moments ($M_{00}$) are the only non-zero moments.
However, other levels of the tree will require $\cO((\trunc+1)^2\log(1/\epsilon))$ qubits per box to store all of the multipole and local expansion coefficients.
The multipole-to-multipole (MM) and local-to-local (LL) translation operators (see loops 2 and 3 of Algorithm~\ref{alg:FMM_higher_order}, respectively) will be applied to generate these coefficients.
Each translation is in 1 of 8 directions to the centroid of its parent (or child), with the distances only depending on which of the $L-1$ pairs of adjacent levels are being traversed.
Similarly, there are at most 189 unique multipole-to-local (ML) translations (see loop 3 of Algorithm~\ref{alg:FMM_higher_order}) in any of the $L-1$ levels in which those translations take place. 
An 8-fold cubic symmetry might also be exploited in implementations in which the ML translation operators strictly depend on the distance between centroids (i.e., any translation can be mapped to an equivalent one in the first octant).
We also note that $L_{00}$ will be the only non-zero component of the local expansion at the leaf level, so an efficient implementation will only use elements of the LL translation operator that contribute to this component.

One of the key design choices in an FMM implementation is the manner in which these translation operators are applied.
A na\"ive classical implementation would require $\widetilde{\cal{O}}(\trunc^4)$ floating-point operations per translation for dense translation operators.
Early on in the history of FMM, it was realised that ML translation operators applied using fast Fourier transforms or in a basis in which they're diagonal would only require $\widetilde{\cal{O}}(\trunc^2)$ floating-point operations~\cite{greengard1988efficient} per ML translation. 
Achieving this scaling in practice required mitigating certain numerical instabilities at the cost of additional implementation complexity~\cite{greengard1997new}.
Alternatively, formulations based on Cartesian tensors would require $\widetilde{\cal{O}}(\trunc^6)$ floating-point operations per ML translation, but could benefit from other design trade-offs (e.g., exact MM and LL translation could be used to reduce the effective size of the interaction list and the number of ML translations)~\cite{shanker2007accelerated}.
It also warrants noting that certain formulations might require complex arithmetic at intermediate levels, even if the resulting potentials are real.
To remain agnostic to these details in the complexity analysis that follows, we will assume that the MM and LL translations both require $\widetilde{\cO}(\trunc^\alpha)$ floating-point operations and the ML translations will require $\widetilde{\cO}(\trunc^\beta)$ floating-point operations, where $\alpha$ and $\beta$ are positive integers with values determined by the precise details of the associated translation operator implementations.
Determining an approach to translation that is best suited to a practical quantum implementation is left to future work.

Finally, we comment on how the choice of $\trunc$ impacts the overall approximation error in $V$.
The approximation error in truncating $V_{i,{\rm far}}$ beyond order $\trunc$ is bounded from above as
\begin{equation}
    |V_{i,{\rm far}}(\trunc\rightarrow \infty)-V_{i,{\rm far}}(\trunc)| \leq \frac{g}{\Omega^{1/3}}\left(\frac{1}{2^{\trunc-L+3}}\right),
\end{equation}
where this bound comes from a routine application of the triangle inequality between leaf-level boxes in each other's interaction lists (i.e., their centroids are separated by at least $\Omega^{1/3}/2^{\Levs-2}$) and $g$ is some $\cO(1)$ constant that depends on the underlying charge distribution.
Thus, it is evident that $\trunc$ should be $\cO(\log(1/\epsilon))$ to ensure an overall approximation error that is $\cO(\epsilon)$.

\begin{algorithm}
    \caption{Classical FMM (order $\trunc$)}\label{alg:FMM_higher_order}
    Initialise $V \leftarrow 0$ \\
    Upward pass\\
    1. \For{$b \in B_{\Levs}$}{
        \For{$l \leftarrow 0$ \KwTo $\trunc$}{
            \For{$m \leftarrow -l$ \KwTo $l$}{
                Initialise $M_{lm}(b) \leftarrow 0$\\
                \For{$j \in p(b)$}{
                    $M_{lm}(b) \leftarrow M_{lm}(b) + q_j \left\|\vb{r}_j-\vb{c}_b\right\|^l Y^{-m}_{l}(\theta_j,\phi_j)$                     
                }
            }
        }
    }
    2. \For{$\ell \leftarrow \Levs-1$ \KwTo $3$}{
        \For{$b \in B_{\ell}$}{
                    Initialise $M_{lm}(b) \leftarrow 0$\\
                    \For{$b' \in C(b)$}{
                        $M_{lm}(b) \leftarrow M_{lm}(b) + \left(\translate_{MM}(b,b') \circ M(b')\right)_{lm}$
                    }
        }
    }
    Downward pass\\
    Initialise $L_{lm}(0),\ldots,L_{lm}(8) \leftarrow 0$ \\
    3. \For{$\ell \leftarrow 3$ \KwTo $\Levs$}{
        \For{$b \in B_{\ell}$}{
                    \For{$b' \in I(b)$}{
                        $L_{lm}(b) \leftarrow L_{lm}(b) + \left(\translate_{ML}(b,b') \circ M(b')\right)_{lm}$
                    }  
                    \If{$\ell\ne L$}{
                     \For{$b' \in C(b)$}{
                        $L_{lm}(b') \leftarrow \left(\translate_{LL}(b,b') \circ L(b)\right)_{lm}$
                    }
                    }
        }
    }
    4. \For{$b \in B_{\Levs}$}{
        \For{$i \in p(b)$}{
            Initialise $V_i \leftarrow 0$ \\
            \For{$l \leftarrow 0$ \KwTo $\trunc$}{
                \For{$m \leftarrow -l$ \KwTo $l$}{
                    $V_i \leftarrow V_i + \left\|\vb{r}_i-\vb{c}_b\right\|^l Y_{l}^{m}(\theta_i,\phi_i) L_{lm}(b)$ 
                }
            }
            \For{$n \in \nearn(b)$}{
                \For{$j \in p(n)$, $i>j$}{
                    $V_i \leftarrow V_i + \Phi(\vb{r}_i,\vb{r}_j) q_j$
                }
            }
            \For{$j \in p(b)$, $i>j$}{
                $V_i \leftarrow V_i + \Phi(\vb{r}_i,\vb{r}_j) q_j$
            }
        $V \leftarrow V+ q_iV_i$
        }
    }
    Return $V$
\end{algorithm}

In the quantum implementation of the full classical FMM, there are a number of simplifications that can be made.
First, in the upward pass at the leaf level, the expansion coefficients are determined using the spherical harmonics $Y_l^{-m}$ in Algorithm \ref{alg:FMM_higher_order}.
In the quantum implementation of the adaptive algorithm, at the leaf level the box size corresponds to the grid resolution.
This means that the particle position is the centre of the box, so $\vb{r}_j-\vb{c}_b=0$, and therefore the only nonzero entries are for $l=m=0$.
At the leaf level, we may simply take $M_{00}(b)=q_j$ as in the monopole case.

Moreover, this simplification may be made when converting $L_{lm}$ to $V_i$ in loop 4.
Because $\vb{r}_j-\vb{c}_b=0$, the only nonzero contribution is for $l=m=0$, and we may set $V_i=L_{00}(b)$.
A further simplification is because in loop 4 we do not need the individual values of $V_i$, and so may add $q_i L_{00}(b)$ to $V$.
Because of the encoding, $L_{00}(b)$ is only stored for charges $q_i=1$, and so we simply add $L_{00}(b)$.
Similarly, for nearest-neighbour boxes, we may simply add $\Phi(\vb{r}_i,\vb{r}_j)$.
The data for these boxes is only present for particles, so the charge need not be explicitly multiplied.
Then the last part in loop 4 is for other particles in the same leaf level box, which may simply be omitted in the quantum implementation.

As compared to the quantum implementation of the monopole algorithm, the key differences to the complexity are as follows.
\begin{enumerate}
    \item The initial complexity in loop 1 of setting $M_{00}(b)$ equal to the charges is trivial.
    \item When combining the multipole information from child boxes in loop 2, we have $\cO(\log N)$ steps where a sort is performed with complexity $\cO(\eta\log\eta)$ steps, but each step has its complexity increased.
    Rather than the items of data to be sorted having size $\cO(\log N)$, we need to include $M_{lm}(b)$.
    That has size $\cO(\trunc^2\log(1/\epsilon))$, given that each entry of $M_{lm}(b)$ is given to $\cO(\log(1/\epsilon)$ bits.
    Taking $\trunc=\cO(\log(1/\epsilon))$, we then replace $\log N$ with $\log^3(1/\epsilon)$.
    Note that these sorts need only include information from $M_{lm}(b)$ for two levels (the parent and child), so we do not incur any extra factor of $\log N$ for the number of levels.
    The complexity of the sorts is therefore
    \begin{equation}\label{eq:potlsorts3}
    \cO(\eta \log \eta \log N \log^3(1/\epsilon)) \, .
    \end{equation}
    \item In order to combine information from child boxes in loop 2, we now need to perform a MM translation to determine the contribution to $M_{lm}(b)$ for the parent box.
    This requires $\widetilde{\cO}(\trunc^\alpha)$ operations, the cost of which will be dominated by multiplications which have complexity $\cO(\log^2(1/\epsilon))$ in terms of elementary gates.
    There are $\cO(\eta\log N)$ steps, and so the complexity is
    \begin{equation}\label{eq:potlMM}
    \widetilde{\cO}(\eta \log N \log^{2+\alpha}(1/\epsilon)) \, .
    \end{equation}
    In this expression, $\widetilde{\cO}$ accounts for logarithmic factors in the MM translation complexity.
    \item When performing the sorts to access the information from boxes in the interaction list, rather than just the box charge we need to include $M_{lm}(b)$ for the given level $\ell$.
    The complexity of this sort is therefore equivalent to that used in loop 2, and is given by Eq.~\eqref{eq:potlsorts3}.
    \item When using Algorithm \ref{alg:copy} to copy the multipole information, $\log\eta$ is replaced with $\log(1/\epsilon)$, and so the complexity becomes
    \begin{equation}
    \cO(\eta \log N \log^3(1/\epsilon)) \, .
    \end{equation}
    \item Rather than just adding into the potential, we need the ML translation, to give the contribution to $L_{lm}(b)$ from $M_{lm}(b')$ for the box in the interaction list.
    This requires $\widetilde{\cO}(\trunc^{\beta})$ operations that will also be dominated by multiplications, and so Eq.~\eqref{eq:mults} is replaced with
    \begin{equation}\label{eq:mults2}
    \widetilde{\cO}(\eta \log N \log^{2+\beta}(1/\epsilon)) \, . 
    \end{equation}
    \item A task which had trivial complexity in the monopole version was copying the local data to the child boxes (because it was assumed the potential was unchanged).
    Here, that now requires LL translations, which has the same complexity as the MM translations above.
    Therefore, the complexity is again given by Eq.~\eqref{eq:potlMM}.
    \item The final additions into the total potential energy in loop 4 are essentially unchanged, because we are adding $L_{00}(b)$ as well as $\Phi(\vb{r}_i,\vb{r}_j)$, which is given by QROM.
    Therefore the complexity is
    \begin{equation}
    \cO(\eta\log(1/\epsilon)) \, . 
    \end{equation}
\end{enumerate}

The dominant contributions to the complexity are from the translations, which will have complexity
\begin{equation}
    \widetilde{\cO}(\eta\trunc^{\max(\alpha,\beta)}\log N  \log^2(1/\epsilon)) = \widetilde{\cO}(\eta\log N \log^{2+\mathrm{max}(\alpha,\beta)}(1/\epsilon)).
\end{equation}
Here, the right expression is from taking the truncation level to be $\cO(\log(1/\epsilon))$.
There are also complexities with a factor of $\log\eta$ for sorts, but that complexity should be smaller because $\eta$ should be smaller than $1/\epsilon$.
It is needed to store data for $M_{lm}$ and $L_{lm}$ at each level and for each particle, and so the qubit usage is
\begin{equation}\label{eq:ho_memory}
    {\cO}(\eta\trunc^2\log N  \log(1/\epsilon)) = {\cO}(\eta\log N \log^3(1/\epsilon)). 
\end{equation}
We give further details of these costings with example parameters in Appendix \ref{app:breakeven}.

We note another caveat in adapting existing classical implementations of FMM to a quantum setting.
We consider one of the canonical FMM implementations described in Ref.~\cite{greengard1997new}.
There, the cost of tree traversal is dominated by the MM and LL translations for which $\alpha=3$, whereas $\beta=2$ for the ML translations.
Nevertheless, the overall scaling for tree traversal is $\cO(\trunc^2)$ (not $\cO(\trunc^3)$) because the number of translations is taken to scale inversely with $\trunc$.
This is a common trick in many implementations that is accomplished by letting the leaf-level particle density scale with $\trunc$, effectively shortening the tree.
While it is still possible to shorten the tree by scaling the number of grid points per leaf box with $\trunc$ in our quantum implementation, this only impacts time and memory complexities by $\cO(\log(\trunc))$ factors, i.e., substituting $N\rightarrow N/\trunc$ in Eqs.~\eqref{eq:potlsorts3}-\eqref{eq:ho_memory}.
Thus, a straightforward adaptation of the scheme in Ref.~\cite{greengard1997new} will instead have $\cO(\trunc^3)$ cost and we expect that any classical implementation that achieves better asymptotics by adjusting the leaf-level density will suffer a similar drawback.
Still, adapting the approach from Ref.~\cite{greengard1997new} is asymptotically superior to the $\cO(\trunc^4)$ cost that would be realised by a totally naive approach to translation.
But borrowing well-established tricks from the classical literature might not perform as well as expected.

Another compelling approach is to avoid ML and LL translations entirely, evaluating the far-field contribution to the potential at each particle directly from multipole expansions aggregated by MM translation alone.
This is described in Appendix~\ref{app:M2P}.
Regardless of the MM translation scheme used, the far-field potential evaluation from the multipole moments requires $\cO(\eta\trunc^2\log N \log^2(1/\epsilon))$ operations.
However, this requires the use of coherent arithmetic to evaluate solid harmonics in lieu of using QROM to access precomputed ML and LL translation operators.
Nevertheless, we will show that this appears to be superior to a straightforward quantum implementation of one of the canonical classical implementations~\cite{greengard1997new}.

\section{Break-even point for speedup over direct summation}
\label{app:breakeven}
In this Appendix, we estimate the constant factors required to determine the break-even point beyond which the FMM will provide a speedup over direct pairwise summation.
This is heavily implementation dependent, so we will consider a quantum version of Greengard and Rokhlin's implementation in Ref.~\cite{greengard1997new}.
This requires cost models for the number of operations involved in direct pairwise summation, $\cC_{\textrm{direct}}(\epsilon)\eta^2$, and the FMM implementation, $\cC_{\textrm{tree}}(\epsilon) \eta$.
Here, $\cC_{\textrm{direct}}$ is the cost of evaluating the Coulomb potential for a single pair of particles and $\cC_{\textrm{tree}}$ is the cost of tree traversal in the FMM.
We are concerned with identifying the value of $\eta$ beyond which $\cC_{\textrm{direct}}(\epsilon)\eta^2>\cC_{\textrm{tree}}(\epsilon) \eta$ for a fixed value of $\epsilon$.
This is
\begin{equation}
    \eta_{\textrm{BE}}(\epsilon) = \mathcal{C}_{\textrm{tree}}(\epsilon)/\mathcal{C}_{\textrm{direct}}(\epsilon). \label{eq:breakeven}
\end{equation}
The cost of evaluating the Coulomb potential for a single pair of particles is $\cO(\log^2(1/\epsilon))$~\cite{arithmetic}.
The cost of tree traversal is primarily determined by the scaling with $\trunc \sim \cO(\log(1/\epsilon))$, which is implementation dependent as discussed in Appendix~\ref{app:additional_fmm}.

Data from Greengard and Rokhlin's classical implementation are consistent with $\trunc \approx 0.8 \log_2(1/\epsilon)$~\cite{greengard1997new}.
While we should expect that an equivalent quantum implementation will make use of a taller tree, this should not dramatically change the constant factor coefficient to $\log_2(1/\epsilon)$ because $L$ is also $\cO(\log(1/\epsilon))$.
However, we will see that this can have a significant impact on the cost of data movement in the quantum implementation.
Greengard and Rokhlin also report the following data for $\eta_{\textrm{BE}}(\epsilon)$:
\begin{itemize}
    \item $\eta_{\textrm{BE}}(\epsilon=4.5 \times 10^{-3}) = 500$, for $\trunc=5$.
    \item $\eta_{\textrm{BE}}(\epsilon=1.4 \times 10^{-4}) = 2000$, for $\trunc=9$.
    \item $\eta_{\textrm{BE}}(\epsilon=1.1 \times 10^{-7}) = 4000$, for $\trunc=18$.
    \item $\eta_{\textrm{BE}}(\epsilon=6.2 \times 10^{-12}) = 5000$, for $\trunc=30$.    
\end{itemize}
This does not necessarily translate directly to the break-even point for a quantum implementation because of the additional costs associated with retrieving data for the interaction list during tree traversal.
Because the classical implementation can make use of fast access to a precomputed octree during tree traversal, it is important to consider the relative cost of data movement and the coherent arithmetic involved in $\cC_{\textrm{tree}}(\epsilon)$ before extrapolating from the classical implementation.

\subsection{Data retrieval costs}
\label{app:data_movement}

There are additional costs involved in retrieving the data in the quantum implementation.
\begin{itemize}
    \item For 3D simulations, there are 8 sorts used to retrieve the data from the interaction list.
    \item For each of those sorts we run through the list of particles once, and account for particles that may be shifted by up to 63 boxes in 3D.
\end{itemize}
To compare the complexity for these operations to the complexity for the operations, we will consider the ratio of the gate complexity to the size of the data for the multipole information.
Here we are considering only a single level at a time, and there is multipole information for each electron of size approximately $\trunc^2\log_2(1/\epsilon)$.

For the sorts, the complexity is $\cO(\eta\log \eta)$ comparators, and for small $\eta\le 32$ highly optimised sorting networks are known.
For larger $\eta$ of thousands a systematic construction is needed, and the $\cO(\eta\log \eta)$ constructions have very large constants.
For example the Zig-zag sort of Goodrich has a best-case complexity of $2700\eta\log_2\eta$ \cite{Zigzag}.
In comparison, Batcher's odd-even mergesort \cite{Batcher} has complexity
\begin{equation}
    \tfrac 14 \eta\log_2^2\eta - \tfrac 14 \eta\log_2\eta + \eta - 1,
\end{equation}
giving complexity approximately $34\eta$ for $\eta=4096$.
The theoretical lower bound is $\lceil\log_2(\eta!)\rceil$, which gives about $10.6\eta$ for $\eta=4096$, so improved sorting networks could reduce the complexity by at most a factor of 3.

For these sorts there are controlled swaps of the multipole information for each comparator.
We also need to move the information for the local expansion, which is the same size as the multipole information.
There is another factor of 2 in order to invert the sorts.
As a result, the combined complexity/data ratio for the sorts is about $2\times 2\times 8\times 34=1088$.

When running through the list of particles, the primary cost is copying data into appropriately shifted registers using Algorithm \ref{alg:copy}.
These registers are used for multipole information from nearby boxes, so it is possible to retrieve information from the interaction list.
In the following discussion, we will call these registers the partial interaction (PI) list as they only include information from part of the interaction list.
As written in Algorithm \ref{alg:copy}, the value of $\Delta k$ is calculated, then data is copied using $\Delta k$ to give the data location.
The extra overhead of $K=63$ for coherent data access can be avoided by using a cyclic shift, as for example given in Ref.~\cite{Niroula2021}.
That is, the bits of $\Delta k$ are used to control a series of swaps to move the data by $\Delta k$ positions.

The procedure can be further simplified by using a single PI list, rather than copying and shifting it for the next electron.
The method is to include the position and multipole information for electron $j$ with the PI list, compute $\Delta k$, then use it to control the cyclic shift of these 64 items, including the information for electron $j$.
Items 2 to 64 are then used for the PI list for electron $j+1$.
If electron $j+1$ is in the same box as $j$, then there is no shift, and the multipole information for electron $j$ remains in that data location.
If there is a shift, then the multipole information for electron $j$ is replaced with information from the PI list that is not needed for electron $j+1$.

In this way there is no cost of copying to a new list, and also only one copy of the PI list need be retained, rather than needing copies of the PI list for each electron.
The values of $\Delta k$ need be kept for each electron to invert the procedure, but only 6 qubits are needed for each value of $\Delta k$, making it a negligible qubit cost.
Because the PI list is updated for each electron, the multipole-to-local translations to update the local expansion must be performed before proceeding to the next electron.

There is an overhead of $6$ for the cost of the procedure shifting the entries in the PI list, corresponding to the number of bits of $\Delta k$.
Therefore, for copying the neighbour information, there is a total complexity/data ratio of $2\times 8\times 6\times 64=6144$, where the factor of $2$ is for inverting the procedure, $8$ is for the different sorts, $6$ is for the controlled shift, and $64$ is the size of the PI list being shifted.

So far we have just considered the movement of the multipole and local expansion.
As will be discussed in the next subsection, the gate complexity is reduced by computing outgoing exponential expansions from the multipole information.
The best gate complexity is obtained by calculating these for all boxes first, rather than calculating every time a box in the interaction list needs to be accessed.
This means that the outgoing exponential expansions would need to be moved in the sort and in shifting the entries in the PI list, rather than the multipole information.

The size of the data for the exponential expansions is approximately $2\times 6\times P_{\textrm{exp}}(\epsilon)\log_2(1/\epsilon)$, where the factor of 2 is for complex numbers, 6 is for the 6 faces of a cube, and $P_{\textrm{exp}}(\epsilon)\approx 1.72\trunc^2$ is the number of basis functions.
This results in the size of the data for the exponential expansion being about a factor of $18.5$ times larger than the data for the multipole information (with $\trunc=18$).
This results in the complexity/data ratio of 1088 above for the sorts becoming 10,621, and the ratio of 6144 for the PI list becoming 113,815.

\subsection{Coherent arithmetic costs}
\label{app:coherent_arithmetic}

The costs of the arithmetic involved in a quantum implementation of the tree traversal scheme in Ref.~\cite{greengard1997new} are described in what follows.
Reference~\cite{greengard1997new} considers several approaches to tree traversal: (1) one used in the original FMM~\cite{greengard1988efficient} with straightforward translations, (2) a refinement in which the cost of translations is reduced through the use of rotation matrices, and (3) a further refinement that uses exponential expansions to reduce the cost even more.
Approach (3) has the best scaling in $\trunc$ and it is the method that was used in generating the break-even points above, so it is what we consider here.

Computing the $\ell=L$  multipole expansions from the charges and the potential from the $\ell=L$ local expansions has negligible cost, because only the monopole moments contribute.
All other MM and LL translations require $\cO(\trunc^3)$ operations and the ML translations require $\cO(\trunc^2)$ operations.
The MM and LL translations are reduced from $\cO(\trunc^4)$ operations due to the use of rotation matrices that orient the expansions such that each translation involves two sets of $\cO(\trunc^2)$ multiplications and three sets of $\cO(\trunc^3)$ multiplications.
The ML translations are more complicated. 
They require converting each multipole expansion into an outgoing exponential expansion, translating in that exponential basis, and converting the resulting incoming exponential expansion into a local expansion.
Each outgoing and incoming exponential expansion is itself decomposed into six distinct directions (one for each face of a box) and a quadrature rule with $\cO(\trunc^2)$ points is used to evaluate an integral representation of the Coulomb potential, implementing translation in the exponential basis.
It should be noted that the ML interaction list for any given box also needs to account for which of the six outgoing and six incoming expansions are properly oriented, i.e., an outgoing exponential expansion leaving one face will only be ``received'' by incoming exponential expansions on oppositely oriented faces. 
Overall, the multipole-to-exponential and exponential-to-local expansions each involve $\cO(\trunc^3)$ multiplications, which only need to be performed once per box.
Each translation in the exponential basis requires $\cO(\trunc^2)$ multiplications and there are at most 189 such translations per box.

We proceed to count the number of multiplications \emph{per particle} during tree traversal.
It is impractical to account for all of the implementation optimisations that are described in Ref.~\cite{greengard1997new}, but we expect to be within an order of magnitude.
For the purposes of estimating the overhead over a classical implementation, this should suffice.
The number of levels where this full summation is performed is $L-3$, so we give that factor in the following estimates.
There is also a direct summation at the leaf level which will be considerably lower cost than the multipole calculations.
Level $\ell=3$ of the multipole calculation involves fewer boxes in the interaction list, but we omit that cost saving for simplicity.
\begin{enumerate}
    \item Each MM translation requires two rotations about the $z$ axis ($\trunc(\trunc+1)$ multiplications each), two rotations about the $y$ axis ($(\trunc+1)(2\trunc+1)(2\trunc+3)/3$ multiplications each), and one translation along the $z$ axis ($(\trunc+1)(\trunc+2)(4\trunc+3)/6$ multiplications). 
    Thus, there are $\left(\frac{10}{3}\trunc^3+\frac{25}{2}\trunc^2+\frac{73}{6}\trunc+3\right)(L-3)$ multiplications per particle due to MM translations up the tree.
    \item Each real multipole expansion will be converted into six complex outgoing exponential expansions.
    Each exponential expansion is given in terms of $P_{\textrm{exp}}(\epsilon)$ basis functions, where $P_{\textrm{exp}}(\epsilon) \sim \cO(\trunc^2)$ is the number of quadrature points used to discretise an integral representation of the Coulomb kernel.
    For accuracy consistent with $\trunc=18$ (i.e., $\epsilon=1.1 \times 10^{-7}$), $P_{\textrm{exp}}(\epsilon)=558$ in Ref.~\cite{greengard1997new},
    which implies $P_{\textrm{exp}}(\epsilon)\approx 1.72 \trunc^2$.
    The number of multiplications involved in converting the multipole expansion to each of the six outgoing exponential expansions will vary, due to the need to apply different numbers of rotation matrices to align with different directions of propagation.
    \begin{enumerate}
        \item There is one conversion that does not involve any rotations, and thus it requires $\trunc(\trunc+1)^2 +2\trunc P_{\textrm{exp}}(\epsilon)$ multiplications.
        The factor of two in the second term accounts for the fact that each operation is actually a multiplication between a real number and a complex number.
        In this case and henceforth, the multiplications are ordered to minimise the number of complex multiplications.
        \item There are three conversions that also involve $y$-axis rotations, requiring $(\trunc+1)(2\trunc+1)(2\trunc+3)/3+\trunc(\trunc+1)^2 +2\trunc P_{\textrm{exp}}(\epsilon)$ multiplications each.
        \item There are two conversions that involve both $y$- and $z$-axis rotations, with $\trunc(\trunc+1)+(\trunc+1)(2\trunc+1)(2\trunc+3)/3+\trunc(\trunc+1)^2 +2\trunc P_{\textrm{exp}}(\epsilon)$ multiplications each.
    \end{enumerate}
    Thus, there are $\left(12P_{\textrm{exp}}(\epsilon)\trunc+\frac{38}{3}\trunc^3+34\trunc^2+\frac{79}{3}\trunc+5\right) (L-3)$ multiplications per particle for the multipole-to-exponential conversions.
    \item Each outgoing-to-incoming ML translation requires $3P_{\textrm{exp}}(\epsilon)$ multiplications, where the factor of 3 reflects the fact that these are multiplications between complex numbers.
    There are at most 189 interacting pairs that require ML translation for any given box and 6 pairs of outgoing/incoming exponential expansions per box.
    Thus, there are at most $3402P_{\textrm{exp}}(\epsilon)(L-3)$ multiplications involved in the outgoing-to-incoming translations, across the tree.
    \item Conversion of the six incoming exponential expansions mirrors the cost of the conversion of the multipole expansions into outgoing exponential expansions. 
    Thus, there are $\left(12P_{\textrm{exp}}(\epsilon)\trunc+\frac{38}{3}\trunc^3+34\trunc^2+\frac{79}{3}\trunc+5\right) (L-3)$ multiplications per particle for the exponential to LL conversions.
    We note that the data become purely real again during this step.
    \item Each LL translation has the same cost as an MM translation. 
    Thus, there are $\left(\frac{10}{3}\trunc^3+\frac{25}{2}\trunc^2+\frac{73}{6}\trunc+3\right)\left(L-3\right)$ multiplications per particle for the LL translations down the tree.    
\end{enumerate}

The total number of multiplications per particle for tree traversal then becomes
\begin{equation}\label{eq:classFMM}
    \left(\left(3\times6\times189 +24\trunc\right) P_{\textrm{exp}}(\epsilon)+32\trunc^3+93\trunc^2+77\trunc+16\right)(L-3) \approx \left(73\trunc^3+5944\trunc^2+\cO(\trunc)\right)(L-3).
\end{equation}
The quantum implementation is more costly than this, because rather than directly accessing all boxes in the interaction list, we are accessing the boxes through 8 different sorts.
In each of these sorts we retrieve 63 neighbours, though at most 56 may be in the interaction list.
This means that the outgoing-to-incoming ML calculations in Step 3 need to be performed 448 times, rather than 189.
A further overhead is that the ML translation in step 3 should be uncomputed before the next step, to avoid needing to store these intermediate results of the calculation in ancillae.
As a result, the 189 in Eq.~\eqref{eq:classFMM} should be replaced with $448\times 2$.

A further difference is that the incoming exponential expansions should be converted for each of the 8 sorts.
That is because the data for the incoming exponential expansions needs to be uncomputed to erase the ancillae this information is stored in, and
that uncomputation needs to be performed before the next sort.
(Otherwise the incoming exponential expansions would need to be stored for every particle for every level, greatly increasing the data storage requirements.)
As a result there is a factor of 16 on the complexity of step 4.
We also uncompute the exponential expansion in step 2, giving a further factor of 2 for that step.
Combining these considerations gives the complexity for the quantum tree traversal as
\begin{equation}\label{eq:quantFMM}
    \left(\left(3\times6\times 896 +216\trunc\right) P_{\textrm{exp}}(\epsilon)+\frac{704}3 \trunc^3+637\trunc^2+\frac{1495}3\trunc+96\right)(L-3) \approx \left(606\trunc^3+28377\trunc^2+\cO(\trunc)\right)(L-3).
\end{equation}

Considering the ratio of the complexity to the size of the multipole data, there is an extra factor of $\log_2(1/\epsilon)$ for translating multiplications to elementary gates.
There would be an ratio of around $(73 \times 18 + 5944)\times 22 =159{,}676$ over the size of the data for coherent arithmetic with 22 bits, were it possible to work directly with the interaction list.
Accounting for the retrieval of data using sorts, as well as uncomputation costs, the ratio becomes $(606 \times 18 + 37624)\times 22 = 864{,}274$.
Including the overhead for the sorts and data retrieval, this means that the coherent implementation has a factor of about $5.4$ larger complexity than the classical implementation (as estimated in Eq.~\eqref{eq:classFMM}).

It is also possible to compute the outgoing exponential expansions (step 2) when building the PI list.
That is, every time multipole information is added to the PI list, the outgoing exponential expansion is calculated and added to the PI list as well.
Then when multipole information is removed from the PI list the corresponding outgoing exponential expansion is uncomputed.
This greatly reduces the amount of data that need be stored, but the cost of step 2 is multiplied by 16 (due to the 8 sorts) instead of 2, giving the complexity for the quantum tree traversal as
\begin{equation}\label{eq:quantFMM2}
    \left(\left(3\times6\times 896 +384\trunc\right) P_{\textrm{exp}}(\epsilon)+412 \trunc^3+1113\trunc^2+867\trunc+166\right)(L-3) \approx \left(1072\trunc^3+28853\trunc^2+\cO(\trunc)\right)(L-3).
\end{equation}
This costing results in a complexity/data ratio of about $10^6$, or about 6.6 times that in Eq.~\eqref{eq:classFMM}.

\subsection{Other additional costs}
\label{app:other_costs}

Another factor that may cause the quantum implementation to be more costly than that given in Ref.~\cite{greengard1997new} is that it uses more levels.
We consider a number of levels sufficient to resolve to the grid points, so for example $7$ levels for $2^{21}$ grid points.
In comparison, the number of levels given in Ref.~\cite{greengard1997new} was about half that.
There will also be an overhead because we have associated multipole information with each electron.
In practice much of the advantage of the FMM is because many electrons are grouped together in each box, greatly reducing the number of boxes below the number of particles.
Together with the factor of about 5 due other quantum overheads from the above subsection, the threshold value of $\eta$ where an improvement is obtained would be increased by at least an order of magnitude.

Another issue that should be considered in the practicality of the quantum FMM is the number of qubits used.
To give a rough estimate, we will consider the example of order-18 FMM, 22 bits of precision, $\eta=4000$ electrons, and $N=2^{21}$.
Given an FMM order of 18 and 22 bits of precision, there would be about $8000$ qubits for the multipole information for each electron.
With $N=2^{21}$ there would be 5 levels to store multipole information for, giving a total of about 160 million qubits for the multipole expansion.
There will also be a similar amount of information for the local expansion, which would mean about 320 million qubits.

Note that we need only include storage costs for multipole information at each level, whereas Section \ref{sec:calcmult} presented a simplified scheme which involves calculation of charge / multipole information at intermediate levels.
As noted in Section \ref{sec:calcpotl}, one may use Algorithm \ref{alg:copy} to retrieve information from all 8 child boxes to compute multipole information for a parent box, so intermediate levels are not needed.
Similar to the PI list, a single list should be used, rather than a separate list for each electron.

In the above implementation, we first consider the case that the outgoing exponential expansions are stored at each level as well, but only one level at a time before being uncomputed.
That would need about 150,000 qubits for each electron, making a total of 590 million qubits.
The total qubit usage would then be about 910 million.
Alternatively, we may just store the outgoing exponential expansions for the PI list, and uncompute them when they are removed from the list.
That reduces the number of qubits for the outgoing exponential expansions to about 9 million, for a total of about 330 million qubits, though it would increase the gate cost.

\subsection{Multipole-to-particle evaluation}
\label{app:M2P}
Given the data usage and overheads involved with using the full FMM with ML and LL translations, it may be preferable to perform a direct multipole-to-particle calculation of the potential energy.
Here we show that this approach significantly reduces both the qubit and gate cost.
The method is to evaluate the contribution to the potential energy at particle $i$ due to its interaction with particles in box $a$ as
\begin{equation}
    V_{i,a} = q_i \sum \limits_{l=0}^{\trunc} \sum \limits_{m=-l}^l M_{lm}(a) \|\vb{r}_i-\vb{c}_a\|^{-l-1} Y_{lm}(\theta_{i,a},\phi_{i,a}), \label{eq:multipole-particle-eval}
\end{equation}
without the use of a local expansion.
This is particularly useful in the quantum implementation, because we perform the calculation for the multipole information associated with each electron.
A major advantage of this approach is that it is no longer necessary to store the outgoing expansion information for all particles, which greatly increases the data storage requirements in the full method.
Moreover, it is no longer necessary to store the local expansions.
The majority of the data storage requirement is then just the multipole information, which reduces the qubit requirement to about 160 million for 4000 electrons.
There is then the possibility for compressed representations to reduce the data requirement further \cite{Tucker,Hesford2011}.

It is relatively straightforward to evaluate each term in Eq.~\eqref{eq:multipole-particle-eval}. 
The irregular solid harmonics may be computed in real form using the following recurrence relations~\cite{perezsaborid2008coordinate}.
In the following we denote the real and imaginary components of $r^{-l-1}Y_{lm}$ by $C_{lm}$ and $S_{lm}$ respectively, with $x,y,z$ being the components of the vector of length $r$.
\begin{enumerate}
\item
The recursion starts with the monopole term for $l=0, m=0$:
\begin{equation}
C_{0,0} = \frac{1}{r}, \qquad S_{0,0} = 0.
\end{equation}

\item Then the terms along the diagonal for $m = l > 0$
are computed using
\begin{align}
C_{l,l} &= \frac{2l-1}{r^2} \left( x C_{l-1,l-1} - y S_{l-1,l-1} \right), \\
S_{l,l} &= \frac{2l-1}{r^2} \left( x S_{l-1,l-1} + y C_{l-1,l-1} \right).
\end{align}

\item The terms in the off-diagonal for $m=l -1$ are given by
\begin{align}
C_{l,l-1} &= \frac{(2l-1) z}{r^2} C_{l-1,l-1}, \\
S_{l,l-1} &= \frac{(2l-1) z}{r^2} S_{l-1,l-1}.
\end{align}

\item More generally, for $0 \leq m <l- 1$ the terms are calculated using the three-term recurrence
\begin{align}
C_{l,m} &= \frac{(2l-1) z}{r^2} C_{l-1,m} - \frac{(l+m-1)(l-m-1)}{r^2} C_{l-2,m}, \\
S_{l,m} &= \frac{(2l-1) z}{r^2} S_{l-1,m} - \frac{(l+m-1)(l-m-1)}{r^2} S_{l-2,m}.
\end{align}
\end{enumerate}
The values of $C_{l,m},S_{l,m}$ for negative values of $m$ are related to their positive counterparts up to a sign, so do not need to be computed or stored.

The fast inverse square root \cite{arithmetic} can be used to compute $1/r=\|\vb{r}_i-\vb{c}_a\|^{-1}$ with complexity comparable to four multiplications \cite{RubinPNAS24}.
In turn, that can be used to compute $1/r^2$, $x/r^2$, $y/r^2$, and $z/r^2$ with four multiplications.
Given that these quantities have been calculated, and ignoring for the moment the multiplication by classically chosen numbers such as $2l-1$, we have the following accounting of the number of multiplications.
\begin{itemize}
    \item Step 2 above is performed $\trunc$ times, and each has two multiplications, for $2\trunc$ multiplications.
    \item Step 3 above is also performed $\trunc$ times and has two multiplications, for a total of $2\trunc$ multiplications.
    \item Step 4 above is performed $\trunc(\trunc-1)/2$ times, and each has four multiplications, for a total of $2\trunc(\trunc-1)$.
\end{itemize}
Together there is a total of $2\trunc(\trunc+1)$ multiplications.

The multiplications by $2l-1$ and $(l+m-1)(l-m-1)$ correspond to what is known as the Multiple Constant Multiplication problem \cite{Multiplierless}, and cost only a single addition each.
For example, for $l=2$, $2l-1=3=2+1$, so multiplication by $2l-1$ can be obtained by one addition between the multiplicand and the bit-shifted multiplicand.
Then for $l=4$, $2l-1=7=4+3$, so the result can be obtained from an addition between the previous result and the bit-shifted multiplicand.
Similarly, all other results can be obtained either by a single addition directly or a single addition between one of the other results and the bit-shifted multiplicand.
For the multiplications by $(l+m-1)(l-m-1)$, most can be obtained in this way, but others can be obtained by bit-shifted copies of other products.
Therefore, we can ignore this cost in comparison to the other products.

For the total cost of the MP procedure, we lastly need to multiply by the coefficients $M_{lm}(a)$, which gives another $(\trunc+1)^2$ multiplications.
As a result, the total number of multiplications is $3\trunc^2+4\trunc+9$.
In this case there is \emph{no} extra factor of 2 for uncomputation, because the contribution to the potential may be used to apply a phase factor for the simulation, and does not need to be added into a total potential.
Multiplying by a factor of 448 for the boxes in the interaction list with sorts gives $1344\trunc^2+1792\trunc+4032$.
In this approach the MM translation is still needed, but not the LL translation, giving the number of multiplications per particle
\begin{equation}
    \left(1344\trunc^2+1792\trunc + 4032+\frac{10}{3}\trunc^3+\frac{25}{2}\trunc^2+\frac{73}{6}\trunc+3\right)(L-3) \approx
   \left(\frac{10}{3}\trunc^3+1356\trunc^2+1804\trunc + 4035\right)(L-3) .
\end{equation}
For $\trunc=18$, this is less than $1/20$ the multiplication count of the quantum FMM in Eq.~\eqref{eq:quantFMM} and even less than 1/4 the multiplication count of the straightforward translation of the Greengard and Rokhlin algorithm given in Eq.~\eqref{eq:classFMM}.

For the example of $L=5$, this would give about 660,000 multiplications per particle for $\trunc=9$, or about $2.5$ million for $\trunc=18$.
In comparison, the explicit summation needs approximately 4 multiplications for each of $\eta(\eta-1)/2$ pairwise potentials, for a total of about $2\eta$ multiplications per particle.
This would give a threshold for speedup over direct summation of about 330,000 particles for $\trunc=9$, or $1.2$ million for $\trunc=18$.
These thresholds are far greater than estimated for the classical FMM, indicating that further optimisation of the quantum implementation would be needed for it to be practical.

In practice, in FMM there is a considerable reduction in the complexity due to using far fewer boxes than particles, which results in it being more efficient than performing calculations using the MP.
It may therefore be expected that to achieve performance comparable to the classical FMM, the implementation of the quantum FMM should be modified to only represent information for the maximum number of boxes needed.
It may also be advantageous to develop a hybrid scheme with direct summation, or the 
quantum FMM for evenly distributed particles.

\end{document}